\newcommand{\restateref}[1]{\IfAppendix{\hyperref[#1]{$\star$}}{\hyperref[#1*]{$\star$}}}
\newtheorem{lemma}[theorem]{Lemma}
\Crefname{property}{Prop.}{Props.}
\Crefname{theorem}{Thm.}{Thms.}
\Crefname{section}{Sect.}{Sects.}
\Crefname{figure}{Fig.}{Figs.}
\newcommand*{\leftarrowhead}{\usefont{U}{lasy}{m}{n}\symbol{40}}
\newcommand*{\righttarrowhead}{\usefont{U}{lasy}{m}{n}\symbol{41}}
\newcommand*{\children}[0]{\ensuremath{\text{children}}}
\newenvironment{sketch}{\medskip\noindent {\emph{Proof sketch.}}}{\hfill\qed}
\renewenvironment{proof}{\medskip\noindent {\emph{Proof.}}}{\hfill\qed}
\title{Tangling and Untangling Trees on Point-sets}
\titlerunning{Tangling and Untangling Trees on Point-sets}
\author{Giuseppe Di Battista}{Roma Tre University, Rome, Italy}{giuseppe.dibattista@uniroma3.it}{https://orcid.org/0000-0003-4224-1550}{}
\author{Giuseppe Liotta}{University of Perugia, Perugia, Italy}{giuseppe.liotta@unipg.it}{https://orcid.org/0000-0002-2886-9694}{}
\author{Maurizio Patrignani}{Roma Tre University, Rome, Italy}{maurizio.patrignani@uniroma3.it}{https://orcid.org/0000-0001-9806-7411}{}
\author{Antonios Symvonis}{School of Applied Mathematical \& Physical Sciences, NTUA, Greece}{symvonis@math.ntua.gr}{https://orcid.org/0000-0002-0280-741X}{}
\author{Ioannis G. Tollis}{Department of Computer Science, University of Crete, Heraklion, Greece}{tollis@csd.uoc.gr}{https://orcid.org/0000-0002-5507-7692}{}
\authorrunning{Di Battista, Liotta, Patrignani, Symvonis, Tollis}
\keywords{Tree drawings, Prescribed edge crossings, Curve complexity, Point-set embeddings, RAC drawings, Topological linear embeddings}
\begin{document}
\maketitle

\begin{abstract}
We study a question that lies at the intersection of classical research subjects in Topological Graph Theory and Graph Drawing: Computing a drawing of a graph with a prescribed number of crossings on a given set $S$ of points, while ensuring that its curve complexity (i.e., maximum number of bends per edge) is bounded by a constant. 
We focus on trees: Let $T$ be a tree, $\vartheta(T)$ be its thrackle number, and $\chi$ be any integer in the interval $[0,\vartheta(T)]$. In the \emph{tangling} phase we compute a topological linear embedding of $T$ with $\vartheta(T)$ edge crossings and a constant number of spine traversals. In the \emph{untangling} phase we remove edge crossings  without increasing the  spine traversals until we reach $\chi$ crossings. The computed linear embedding is used to construct a drawing of $T$ on $S$ with $\chi$ crossings and constant curve complexity. Our approach gives rise to an $O(n^2)$-time algorithm for general trees and an $O(n \log n)$-time algorithm for paths. We also adapt the approach to compute RAC drawings, i.e. drawings where the angles formed at edge crossings are~$\frac{\pi}{2}$. 

\end{abstract}

\section{Introduction}\label{sec:introduction}

In this paper we study the following problem:
Given an $n$-vertex tree $T$, a set $S$ of $n$ distinct points in the plane, and a non-negative integer $\chi$, find a simple drawing $\Gamma$ of $T$ with the following properties: 
\begin{inparaenum}[(a)]
\item\label{prop:pointeset}$\Gamma$ is a \emph{point-set embedding} of $T$ on $S$, i.e. its vertices are the points of $S$,
\item\label{prop:curve-complexity}the edges of $T$ are represented by polylines with constant \emph{curve complexity}, measured as the maximum number of bends per edge, and
\item\label{prop:crossings}the edges cross exactly $\chi$ times in~$\Gamma$.
\end{inparaenum}
In the following we refer to them as Property~\ref{prop:pointeset}, Property~\ref{prop:curve-complexity}, and Property~\ref{prop:crossings}.

The problem locates itself at the intersection of three well studied research subjects in Graph Drawing and Graph Theory, namely
the study of 
point-set embeddings,
the study of 
drawings with limited curve complexity,
and 
the study of 
drawings with prescribed numbers of edge crossings.
Before describing our contribution, we briefly recall each such topic.

\noindent Property \ref{prop:pointeset} -- For a graph $G$ with $n$ vertices and a set $S$ of $n$ distinct points in the plane, a \emph{point-set embedding} of $G$ on $S$ is a drawing of $G$ where each vertex is mapped to a distinct point of $S$. Most of the literature on this topic is about planar ($\chi=0$) point-set embeddings of 
graphs (see, e.g., \cite{biedl2012point,bose2002-outerplanar,bose1997-trees,cabello2006planar,gritzmann1991e3341,nishat2012point}). Specifically, it is NP-complete to test if a planar graph $G$ has a straight-line planar point-set embedding on a given point-set~\cite{cabello2006planar}. 
If $G$ is an outerplanar graph and the points are in general position such a drawing can  constructed in polynomial time (see, e.g.,~\cite{DBLP:journals/comgeo/Bose02,bose1997-trees}).

\noindent Property \ref{prop:curve-complexity} -- Together with the edge crossing minimization, the minimization of the edge bends and/or of the bends per edge are among the oldest optimization questions in graph drawing (see, e.g.,\cite{DBLP:books/ph/BattistaETT99,DBLP:conf/dagstuhl/1999dg,DBLP:reference/crc/2013gd}).
In particular, the relationship between point-set embeddings and curve complexity has been studied when the mapping between the vertices of the graph and the points is specified as part of the input as well as when it is partially specified or not specified (see, e.g., \cite{DBLP:journals/tcs/BadentGL08,DBLP:journals/jgaa/GiacomoDLMTW08,DBLP:journals/tcs/GiacomoGLN20,DBLP:journals/algorithmica/GiacomoLT10,KaufmannW02,DBLP:journals/gc/PachW01}). Notably, Kaufmann and Wiese proved that when the mapping is not specified a planar graph admits a crossing-free point-set embedding with constant curve complexity, namely at most two bends per edge~\cite{KaufmannW02}.

\noindent Property \ref{prop:crossings} -- The \emph{thrackle bound} $\vartheta(G)$ of a graph $G = (V,E)$ with $m = |E|$ edges is the number of crossings that a drawing of $G$ would have if every edge crosses every other non-adjacent edge exactly once.
It is well known that $\vartheta(G) = (m(m+1) - \sum_{v \in V}\deg^2(v))/2$; see, e.g.,~\cite{piazza1988,schaefer2013}.
As an example, for a cycle $C_n$ with $n$ vertices the above formula  gives $\vartheta(C_n) = n(n-3)/2$.
For a path $P_n$ with $n$ vertices, where all but two vertices have degree two, the formula  gives $\vartheta(P_n) = (n-2)(n-3)/2$.
A \emph{thrackle} is a drawing of a graph $G$ with exactly $\vartheta(G)$ crossings. Not all graphs admit thrackles. For example, a cycle $C_4$ with $4$ vertices does not admit a drawing with $\vartheta(C_4)= 2$ crossings. As reported in~\cite{Woodall69}, Conway conjectured in 1969 that each thrackleable graph contains at most as many edges as vertices. His conjecture,  known as  \emph{Conway's thrackle conjecture,} still remains open. 
Woodall~\cite{Woodall69} was the first to work on the thrackle conjecture and, assuming the conjecture was true, showed that a finite graph admits a thrackle if and only if it contains at most one odd cycle, no 4-cycle, and each of its connected components is either a tree or it contains exactly one cycle.  
Refer to~\cite{FulekPach2019,LovaszPS97,Xu2021} and the references therein for progress relevant to Conway's Thrackle conjecture. Graph classes that admit thrackles (and satisfy the thrackle conjecture) include cycles of more than 4 vertices~\cite{Harborth1988} and trees~\cite{piazza1988}.
Given a tree $T$ and an integer $0 \leq \chi \leq \vartheta(T)$, Piazza et al.~\cite{piazza1988} presented an algorithm that constructs a drawing of $T$ with $\chi$ crossings and curve complexity $O(n)$. A similar result was presented by Harborth~\cite{Harborth1988} for any cycle $C_n~,n>4$.  Hence, the  algorithms in~\cite{piazza1988} and~\cite{Harborth1988} compute drawings that satisfy Property~\ref{prop:crossings}.

\noindent Relationship between Properties~\ref{prop:curve-complexity} and~\ref{prop:crossings} -- A \emph{linear thrackle} is a thrackle with curve complexity zero. Not all graphs admit linear thrackles. For example, a cycle $C_6$ with $6$ edges does not admit a linear thrackle with $\vartheta(C_6)=9$ crossings (but it admits a thrackle~\cite{piazza1988}). 
It should be noted that, as proved by 
Erd\H{o}s~\cite{erdos1946} and Perles~\cite{PachS2011},  Conway's conjecture holds for linear thrackles.
More generally, there is a complex relationship between number of crossings and curve complexity. For example, a cycle $C$ with an odd number of vertices admits a drawing with curve complexity zero and $\chi$ crossings, where $\chi$ is any number up to $\vartheta(C)$ except $\vartheta(C)-1$~\cite{furry-mrcc-1977}. On the other hand, an $n$-vertex cycle $C$ with an even number of vertices admits a drawing with complexity zero and $\chi$ crossings, where $\chi$ is any number up to
$n(n-4)/2+1 = \vartheta(C)-n/2 +1$~\cite{furry-mrcc-1977}. 
Notice that there exist planar graphs admitting straight-line pointset embeddings on non-convex pointsets that have more crossings than any straight-line pointset embedding on a pointset in convex position~\cite{ChimaniFKUVW_2024}


Besides the above properties we are also interested in producing RAC drawings.
A seminal user-study by Huang, Eades, and Hong~\cite{HuangEH14} shows that edge crossings that form sharp angles affect the readability of a drawing much more than those that form large angles. This has motivated the study of right-angle crossing drawings (\emph{RAC drawings}), that are graph drawings where the crossing angles are all $\frac{\pi}{2}$; see, e.g.,~\cite{Didimo2020,DidimoLM2019} for surveys and about RAC drawings. Point-set RAC embeddings were studied by Fink et al.~\cite{FinkHMSW12} who, among other results, 
show that every graph admits a point-set RAC embedding with curve complexity $3$.

\noindent{\bf Our Contribution:} We solve the main problem in this paper for any $0 \leq \chi \leq \vartheta(T)$. To this aim, we first present an $O(n)$-time algorithm which \emph{tangles} $T$ by computing a linear layout with $\vartheta(T)$ edge crossings where each edge traverses the spine (i.e. the line passing through the vertices) a constant number of times. We then show how to \emph{untangle} the linear layout until we reach the desired number $\chi$ of edge crossings. By carefully \emph{pruning} some edges of $T$, such a linear layout can be computed in $O(n^2)$ time. If $T$ is a path the time to construct the linear layout reduces to $O(n)$. Finally, the linear layout is used to construct a point-set embedding $\Gamma$ of $T$ having $\chi$ edge crossings and curve complexity at most $5$. If we require $\Gamma$ to be a RAC drawing, then $\Gamma$ has curve complexity at most $9$, which decreases to $6$ if the points are collinear. For the case of paths, a RAC point-set embedding with curve complexity $3$ can be computed in $O(n \log n)$ time.

The proofs of statements with a (clickable) ``$\star$'' appear in the appendix.

\section{Preliminaries}\label{sec:preliminaries}


We only consider \emph{simple} graphs, i.e.\ graphs not containing loops and multiple edges. 
%
%
A \emph{drawing} of a graph $G(V,E)$ maps each vertex $v \in V$ to a distinct point on the plane and  each edge $(u,v) \in E$ to a Jordan arc joining the points representing $u$ and $v$. A drawing is \emph{simple} if: (i) two adjacent edges do not 
intersect except at a common end-vertex;
(ii) no three edges intersect on the same point; (iii) two edges intersect at most once at a common interior point; and (iv) the intersection between a pair of edges must be ``transverse'', i.e., the curves of the two edges must alternate around the intersection point. We only consider simple drawings of graphs which we shall call just \emph{drawings}, for short.
A drawing is \emph{straight-line} if the edges are represented as straight-line segments. A drawing is \emph{polyline} if each edge is represented as a chain of straight-line segments such that any consecutive segments of the chain have different slopes. The \emph{curve complexity} of a polyline drawing is the maximum number of bends per edge in the drawing. A drawing is a \emph{RAC drawing} if any two edges that cross do so by forming four $\frac{\pi}{2}$ angles around the crossing point.

A \emph{topological linear embedding} $\mathcal{E}$ of graph $G(V,E)$ is defined as follows. (1) Each edge $(u,v)$ of $E$ is mapped to a \emph{subdivision path} $\pi$ composed of $k >0$ \emph{legs} $\langle (u_0,u_1), (u_1,u_2),$ \dots, $(u_{k-1},u_k) \rangle$ with $u_0=u$ and $u_k=v$, possibly consisting of a single leg $(u,v)$. The \emph{subdivision vertices} $u_1, u_2, \dots, u_{k-1}$ internal to $\pi$ are \emph{spine traversals}. (2) The vertices in $V$ and the spine traversals are ordered in $\mathcal{E}$; this order is the \emph{spine} of $\mathcal{E}$. (3) The legs are subdivided into two partition sets named \emph{top page} and \emph{bottom page} in such a way that any two consecutive legs of a subdivision path are assigned to distinct partition sets. (4) Two legs $(a,b)$ and $(c,d)$ in the same partition set \emph{cross} if $a \succ c \succ b \succ d$. (4.1) No two legs of the same edge can 
cross. (4.2) Edges $(u,v)$ and $(w,z)$ cross $\rho$ times if there are $\rho$ distinct pairs of their legs that cross. We impose $\rho \leq 1$. If $\rho=0$ 
$(u,v)$ and $(w,z)$ \emph{do not cross in $\mathcal{E}$}, whereas, if $\rho=1$ they do.


A drawing $\Gamma$ of a topological linear embedding $\mathcal{E}$ can be constructed as follows. The vertices and the spine traversals of $\mathcal{E}$ are represented in $\Gamma$ as points along a horizontal line, called \emph{spine} of $\Gamma$, in the same left-to-right order they have in $\mathcal{E}$. The half-plane above (resp.\ below) the spine is the \emph{top page} (resp.\ \emph{bottom page}) of $\Gamma$. A leg $(a,b)$ of $G$ assigned to the top page (resp.\ bottom page) in $\mathcal{E}$ is represented as the arc of a circumference in the top page (resp. bottom page) of $\Gamma$ having the points corresponding to $a$ and $b$ along the spine as the extreme points of the diameter. Such arc is a \emph{leg} of $\Gamma$. As a consequence, two legs cross in $\Gamma$ if and only if they cross in $\mathcal{E}$. If $a$ is a spine traversal in the topological linear embedding, the point representing $a$ along the spine of $\Gamma$ is a \emph{spine traversal} of $\Gamma$.  Drawing $\Gamma$ is a \emph{topological linear layout}
of $G$.  See~\cref{fig:two-page-book-embedding}. It is immediate to see that, given a topological linear embedding of a graph $G$,  a topological linear layout of $G$ can be constructed in $O(n + \sigma m)$ time, where $n$, $m$ and $\sigma$ denote the number of vertices, edges, and maximum number of spine traversals per edge of $G$, respectively.

\begin{figure}[tb]
   \centering
   \includegraphics[width=0.7\textwidth,page=7]{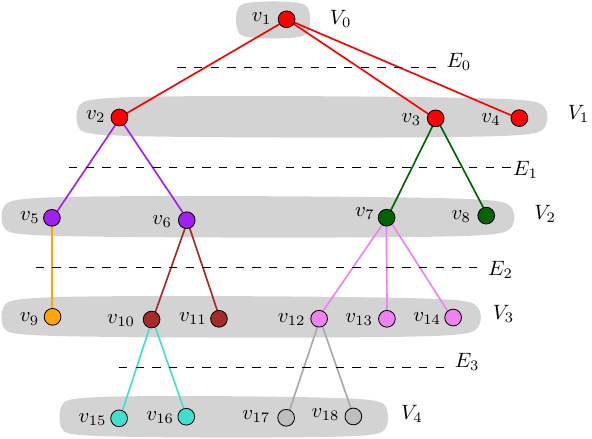}
   \caption{ A topological linear layout of the tree of \cref{fig:example-tree} obtained from the topological circular layout of
   \cref{fig:tree-whole}.}
   \label{fig:two-page-book-embedding}
\end{figure}

\section{Topological Linear Embeddings of Trees}\label{sec:trees}
In this section we describe Algorithm \textsc{Tangle}, that computes in $O(n)$ time a topological linear embedding of an $n$-vertex tree $T$ with $\vartheta(T)$ crossings and with two subdivision vertices per edge. Second, we describe Algorithm \textsc{Untangle} which can be iteratively used to reduce the number of crossings from $\vartheta(T)$ down to zero, hence guaranteeing the existence of a topological linear embedding with $\chi$ crossings, $\chi \in [1..\vartheta(T)]$ and with at most two subdivisions per edge. Since $\vartheta(T) \in O(n^2)$ and each step of \textsc{Untangle} takes $O(n)$ time, 
this leads to an $O(n^3)$-time algorithm to produce a topological linear embedding with $\chi$ crossings, $\chi \in [1..\vartheta(T)]$. We will also show how to reduce this time to $O(n^2)$.

For the ease of the description, instead of producing a topological linear embedding, we produce a \emph{topological circular embedding} 
whose definition is analogous to that of a topological linear embedding with the difference that the spine is a circular ordering rather than a linear ordering.
A topological linear embedding is obtained from a topological circular embedding and vice versa by considering the spine ordering as linear or circular, respectively. 
Observe that crossings among the edges and spine traversals are preserved in the two embeddings.

\subsection{Algorithm \textsc{Tangle}}

In the topological circular embedding $\mathcal{E}_\circ$ of tree $T(V,E)$ constructed by Algorithm \textsc{Tangle} each edge $e \in E$ has two subdivision vertices and, hence, it is split into three legs denoted by $\firstleg{e}, \secondleg{e}$, and $\thirdleg{e}$, where $\firstleg{e}$ is incident to the parent vertex and $\thirdleg{e}$ is incident to the child vertex. We will assign $\firstleg{e}$ and $\thirdleg{e}$ to the inner region (bottom page) and second leg $\secondleg{e}$ to the outer region (top page). The crossing among legs will happen exclusively in the inner region.

We assume that $T$ is rooted at vertex $v_1$ and planarly embedded in such a way that for any vertex $v_i$, with $i=1, \dots, n$, the non-leaf children of $v_i$ precede the leaf children of $v_i$ (see, for example, \cref{fig:example-tree}). Let $h$ be the height of $T$. For $j=0,\dots,h$, we denote by $V_j$ the set of the vertices at depth $j$ in $T$ and by $E_j$, $j=0,\dots,h-1$, the edges between the vertices in $V_j$ and those in $V_{j+1}$.
We choose a sequence $P = \langle p_1, p_2, \dots, p_{|V|+2|E|}\rangle$ of $|V|+2|E|$ distinct points in this clockwise order along a cycle $\mathcal{C}$. The points in $P$ will be used either by the vertices in $V$ or by the subdivision vertices of the edges in $E$. 
We subdivide the points in $P$ into contiguous subsequences of points along $\mathcal{C}$. More specifically, if $h$ is even, we subdivide $P$ into $P_0$, $P_2$,  $P_4$, \ldots,  $P_h$, $P_1$, $P_3$, $P_5$, \ldots, $P_{h-1}$; otherwise, we subdivide $P$ into 
$P_0$, $P_2$,  $P_4$, \ldots,  $P_{h-1}$, $P_1$, $P_3$, $P_5$, \ldots, $P_{h}$. Each subsequence $P_j$ ($j = 0, 1, \dots, h$) has size $|P_j| = |V_j| + 2|E_j|$ and is used by the vertices in $V_j$ and by the subdivision vertices of the edges in $E_j$ (refer to \cref{fig:sets-P_j}).

\begin{figure}[tb]
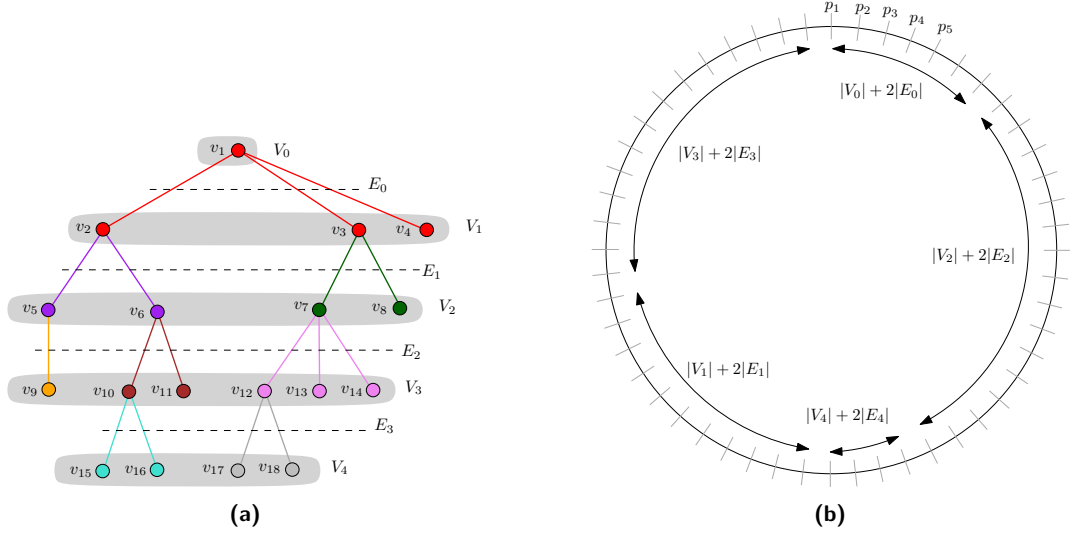

    \begin{subfigure}[b]{0.45\textwidth}
      \centering
      \includegraphics[page=1,width=\textwidth]{example}
      \subcaption{}
      \label{fig:example-tree}
    \end{subfigure}\hfill
    \begin{subfigure}[b]{0.45\textwidth}
    
    \includegraphics[page=2,width=\textwidth]{example}
      \subcaption{}
      \label{fig:sets-P_j}
    \end{subfigure}
    \caption{An example tree (a) and the subsequences of points on $\mathcal{C}$ used for its vertices/edges (b).}
    \label{fig:tree-example}
\end{figure}

\medskip
\noindent\textbf{Vertex placement.} 
We map the vertices of $V_j$ to specific points of $P_j$. If $j$ is even (resp. $j$ is odd), denote by $v_{j,1}, v_{j,2}, \dots v_{j,|V_j|}$ the vertices of $V_j$ in the order (resp. in the reversed order) as they appear at depth $j$ in~$T$.  
For $k = 1, \dots, |V_j|$, we leave $\children(v_{j,k})$ points of $P_j$ unused and then we place $v_{j,k}$ (see \cref{fig:tree-vertex-placement}). Observe that after the last vertex $v_{j,|V_j|}$ of $V_j$ has been placed, there are $2|E_j|$ unused points, $|E_j|$ of which are at the end of~$P_j$.

\medskip
\noindent\textbf{Edge routing.} We describe how to embed the edges of $E_j$, for each $j = 0, \dots, h-1$. Denote by $p_{j,1}, p_{j,2}, \dots, p_{j,|V_j| + 2|E_j|}$ the points of $P_j$ in this clockwise order around $\mathcal{C}$. Recall that the vertices of $V_j$ have been already placed in $P_j$ and that each vertex $v_{j,k}$, with $k = 1, \dots, |V_j|$, is preceded by $\children(v_{j,k})$ unused points, while the whole sequence $P_j$ is closed by $|E_j|$ unused points.
For $k = 1, \dots, |V_j|$, we process vertex $v_{j,k}$ of $V_j$ and, for $\ell=1,\dots,\children(v_{j,k})$, we embed each edge $e_\ell \in |E_j|$ incident to $v_{j,k}$, where edges in $E_j$ are considered in the left to right order as they appear in the embedding of~$T$. 
Intuitively, edge $e_\ell$ will greedily use the last unused point of $P_j$ and the first unused point of $P_j$. More formally, we embed $\firstleg{e_\ell}$ in the inner region using as its endpoint the last unused point of $P_j$ (refer to \cref{fig:tree-edge-routing}); we embed $\secondleg{e_\ell}$ in the outer region using as its second endpoint the $(\children(v_{j,k})-\ell+1)$-th point of $P_j$ clockwise preceding the point used by $v_{j,k}$ (which is also the first unused point of $P_j$); we embed $\thirdleg{e_\ell}$ in the inner region hitting the point of $P_{j+1}$ where the $\ell$-th child of $v_{j,k}$ lies.

\begin{figure}[tb]
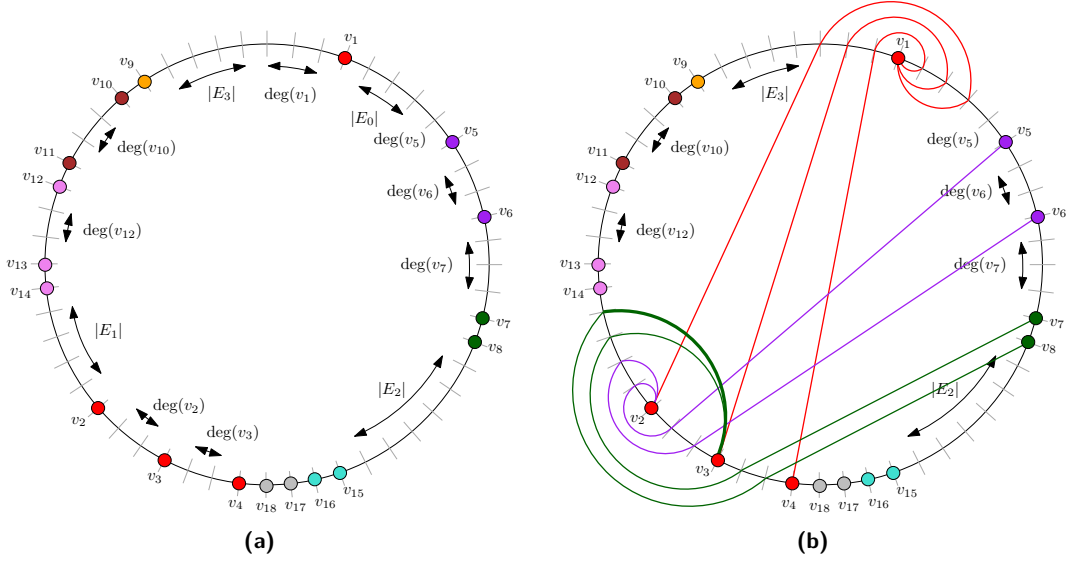

    \begin{subfigure}[b]{0.48\textwidth}
    
      \centering
      \includegraphics[page=3,width=\textwidth]{example}
      \subcaption{}
      \label{fig:tree-vertex-placement}
    \end{subfigure}\hfill
    \begin{subfigure}[b]{0.48\textwidth}

    \includegraphics[page=5,width=\textwidth]{example}
      \subcaption{}
      \label{fig:tree-edge-routing}
    \end{subfigure}
    \caption{Placement of the vertices  (a)  and  Edge routing (b) of Algorithm \textsc{Tangle}.}
    \label{fig:entangle-2}
\end{figure}

\begin{restatable}[\restateref{le:Entangle}]{lemma}{leEntangle}\label{le:Entangle}
Algorithm \textsc{Tangle} computes in $O(n)$ time a topological circular embedding of an $n$-vertex tree $T$ with $\vartheta(T)$ crossings where every edge traverses the spine exactly twice.    
\end{restatable}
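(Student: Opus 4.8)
The plan is to verify, in order, the running time, the structural claims (each edge is split into three legs with exactly two subdivision vertices, hence two spine traversals), and finally the crossing count. The first two are routine. A planar embedding of $T$ with the prescribed child ordering, the level sets $V_j$ and $E_j$, and the point sequence $P$ are all computable in $O(n)$ time; the vertex-placement and edge-routing loops touch each vertex and each edge a constant number of times; and $|P| = |V| + 2|E| \in O(n)$. Hence the whole construction runs in $O(n)$ time. By construction every edge $e \in E_j$ is split into $\firstleg{e}$, $\secondleg{e}$, $\thirdleg{e}$, and its two subdivision vertices are placed on two \emph{distinct} unused points of $P_j$ --- one in the block of points reserved for the parent of $e$, the other in the trailing block of $P_j$ --- hence they are distinct from each other and from every vertex. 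Since $\firstleg{e}$ and $\thirdleg{e}$ go to the inner region and $\secondleg{e}$ to the outer region, consecutive legs of $e$ alternate pages. Therefore the output is a well-formed topological circular embedding $\mathcal{E}_\circ$ in which every edge traverses the spine exactly twice, once we also confirm that no two edges cross more than once ($\rho \le 1$); this falls out of the crossing analysis.

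For the crossing count, the stated formula for $\vartheta$ rewrites as $\vartheta(T) = \binom{|E|}{2} - \sum_{v}\binom{\deg v}{2}$, i.e.\ $\vartheta(T)$ equals the number of unordered pairs of non-adjacent edges of $T$. So it suffices to prove: (a) no two legs of the same edge cross and no two edges cross more than once; (b) no two \emph{adjacent} edges cross; (c) every two \emph{non-adjacent} edges cross. The key localization is that, for $e \in E_j$, both endpoints of $\firstleg{e}$ and of $\secondleg{e}$ lie in the single arc $P_j$, whereas $\thirdleg{e}$ has one endpoint in $P_j$ and one in $P_{j+1}$; since the arcs $P_0, \dots, P_h$ are pairwise disjoint, this forces that an outer-region crossing can occur only between two second legs of edges in a common $E_j$; that a first leg of $E_j$, being an inner-region chord with both ends in $P_j$, can cross only an inner-region leg having an endpoint in $P_j$ (so a first or third leg of an edge in $E_j$, or a third leg of an edge in $E_{j-1}$); and that two edges whose levels differ by at least $2$ can cross only through their third legs. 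This reduces (a)--(c) to a small case analysis on the pair of levels $(i,j)$ of the two edges: $i = j$ (sub-cases: same parent / different parents); $j = i+1$ (sub-cases: the child of the edge in $E_i$ is / is not the parent of the edge in $E_{i+1}$); and $j \ge i+2$.

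In each case the argument records the exact cyclic positions along $\mathcal{C}$ of the endpoints of the (at most six) legs involved --- for $e = (v_{j,k},c)$ with $c$ the $\ell$-th child of $v_{j,k}$ these endpoints are $v_{j,k}$, the two subdivision vertices of $e$ (one in the trailing block of $P_j$, one in the block reserved for $v_{j,k}$), and $c$, which lies in $P_{j+1}$ --- and checks, for each of the at most $3 \times 3$ pairs of legs, whether the two chords interleave cyclically, i.e.\ whether the endpoints of one separate those of the other on $\mathcal{C}$. The outcome one verifies is: adjacent edges always nest, yielding no crossing; two non-adjacent edges at the same level cross exactly once, through the first leg of one against the third leg of the other; and two non-adjacent edges at distinct levels cross exactly once, through their third legs (for consecutive levels, possibly instead through a first leg of the deeper edge against the third leg of the shallower one). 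Summing over all pairs gives exactly one crossing per non-adjacent pair and none per adjacent pair, so $\mathcal{E}_\circ$ has exactly $\vartheta(T)$ crossings and in particular satisfies $\rho \le 1$. I expect the main obstacle to be precisely this bookkeeping: the third legs are the only legs spanning two arcs, hence the only source of inter-level crossings, and it is the interplay of the ``even arcs, then odd arcs'' cyclic layout with the reversal of the vertex order at odd depths that makes the third legs interleave in exactly the right pattern; without the reversal some non-adjacent pairs would be crossed twice. Ruling out such a double crossing in every sub-case, while confirming at least one crossing for every non-adjacent pair, is the delicate part.
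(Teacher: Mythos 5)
Your proposal is correct and follows essentially the same route as the paper's proof: localize which pairs of legs can possibly cross by noting which arcs $P_j$ contain their endpoints (so outer crossings are impossible, first legs never cross first legs, and all actual crossings are first-vs-third or third-vs-third), then do a case analysis on the level pair checking cyclic interleaving, with the even/odd arc arrangement and the order reversal at alternate depths guaranteeing that same-level third legs nest and that every non-adjacent pair crosses exactly once. The outcomes you state for each case (adjacent edges nest; same-level non-adjacent pairs cross via first-vs-third; far-apart levels cross via third-vs-third; consecutive levels via one of the two) coincide with the paper's.
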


\begin{sketch}
By construction, all the edges of a topological circular embedding $\mathcal{E}_\circ$ produced by Algorithm \textsc{Tangle} are composed by three legs, the second of which is drawn in the outer region. 
It can be easily shown that the second legs of the edges in $E$ do not cross in $\mathcal{E}_\circ$. Also, the first legs of the edges in $E$ do not cross among themselves in $\mathcal{E}_\circ$. 
It follows that the only possible crossings in $\mathcal{E}_\circ$ are between a first leg and a third leg of two edges in $E$ or between two third legs of two edges in $E$. 
Consider two edges $(u,v)$ and $(w,z)$ of the same $E_j$, $j \in [0..h-1]$, where $u$ and $w$ belong to $P_j$. Edges $(u,v)$ and $(w,z)$ cross whenever $u$ uses a point of $P_j$ that precedes the point of $P_j$ used by $w$ ($\firstleg{(u,v)}$ crosses $\thirdleg{(w,z)}$). Hence, all the edges of $E_j$ cross among themselves with the exception of those that are incident to the same vertex of $V_j$. 
Consider two edges $(u,v) \in E_j$ and $(w,z) \in E_{j-1}$, with $j \in [1..h-1]$, where $u$ and $z$ belong to $V_j$. Edges $(u,v)$ and $(w,z)$ cross when $u$ uses a point of $P_j$ that precedes the point of $P_j$ used by $z$ ($\firstleg{(u,v)}$ crosses $\thirdleg{(w,z)}$). 
Also, $(u,v)$ and $(w,z)$ cross when $u$ uses a point of $P_j$ that follows the point of $P_j$ used by $z$ ($\thirdleg{(u,v)}$ crosses $\thirdleg{(w,z)}$). Hence, $(u,v) \in E_j$ and $(w,z) \in E_{j-1}$ cross whenever they are not adjacent (i.e., whenever $u \neq z$).
Finally, two edges $(u,v) \in E_j$ and $(w,z) \in E_{j'}$ with $j' \notin \{j-1,j,j+1\}$ always cross once ($\thirdleg{(u,v)}$ crosses $\thirdleg{(w,z)}$). In conclusion two edges of $T$ cross if and only if they are not adjacent. Hence the number of crossings is $\vartheta(T)$.
It is easy to check that $\mathcal{E}_\circ$ can be computed in $O(n)$ time.
\end{sketch}

\begin{figure}[tb]
  \centering
  \includegraphics[width=0.6\textwidth,page=6]{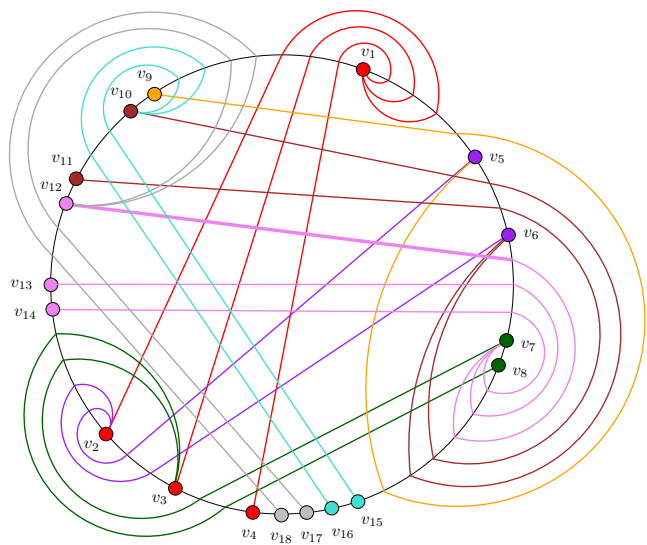}
  \caption{A drawing representing a topological circular embedding of the tree $T$ of \cref{fig:example-tree} with $\vartheta(T)$ edge crossings and two spine crossings per edge. 
  }
  \label{fig:tree-whole}
\end{figure}

\begin{figure}[tb]
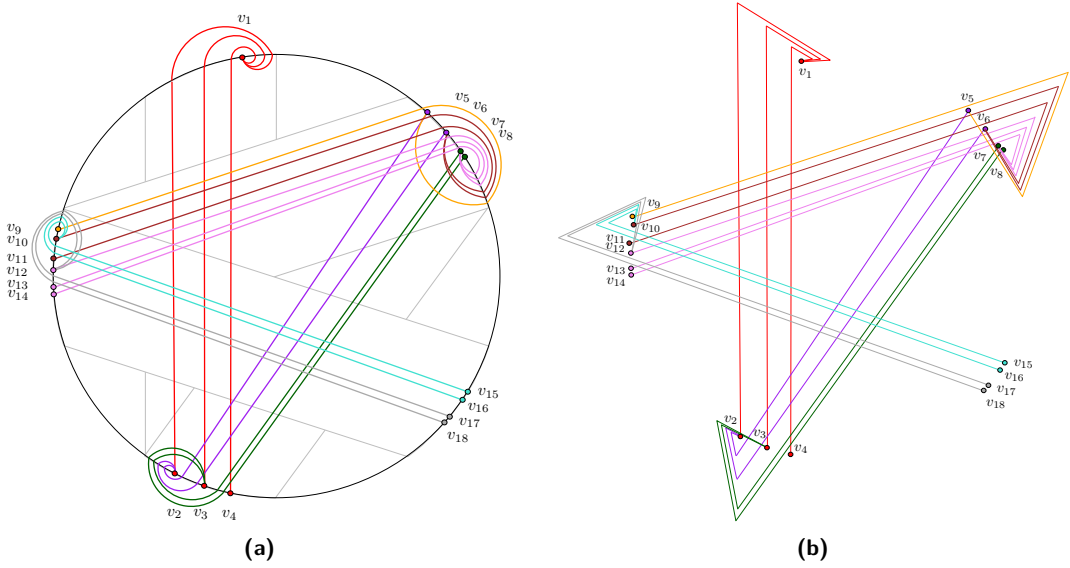

    \captionsetup[subfigure]{justification=centering}
    \centering
    \begin{subfigure}{0.48\textwidth}
      \centering
      \includegraphics[page=12,width=\textwidth]{example}
      \subcaption{}
      \label{fig:foo3}
    \end{subfigure}\hfill
    \begin{subfigure}[b]{0.48\textwidth}
    \includegraphics[page=13,width=\textwidth]{example}
      \subcaption{}
      \label{fig:thrackle-4}
    \end{subfigure}
    \caption{From a topological circular embedding of a tree to a thrackle with curve complexity~two.}
    \label{fig:thrackle-tree}
\end{figure}

An example drawing of the topological circular embedding computed by Algorithm \textsc{Tangle} is shown in \cref{fig:tree-whole}.  As a byproduct of the results in this section, we point out that from the circular topological embedding $\mathcal{E}_\circ$ it is not difficult to compute a polyline drawing of $T$ with $\vartheta(T)$ crossings (a thrackle of $T$) with curve complexity two as in \cref{fig:thrackle-tree}.

\subsection{Algorithm \textsc{Untangle}}\label{se:untangle}

We describe Algorithm \textsc{Untangle} as if the target were to reduce the number of crossings down to zero. In order to obtain a topological circular embedding $\mathcal{E}^\chi_\circ$ with $\chi$ crossings it is sufficient to run $\vartheta(T)-\chi$ iterations of the algorithm. 

Let $\vartheta'(T) = \frac{1}{2}\cdot|E|^2 -
\frac{1}{2}\cdot \sum_{i=0}^{h-1}|E_i|^2 - 
\sum_{i=0}^{h-2}(|E_i| \cdot |E_{i+1}|)$.
The algorithm has two phases: in the first phase the 
crossings 
are reduced from $\vartheta(T)$ to $\vartheta'(T)$, obtaining a topological circular embedding $\mathcal{E}^{\vartheta'(T)}_\circ$ where each edge is subdivided exactly once. The reason for the choice of $\vartheta'(T)$ will become evident soon. In the second phase the number of crossings is \hbox{reduced  to zero.}  

\medskip
\noindent\textbf{Phase 1: reducing the crossings to $\vartheta'(T)$.}
Let $\mathcal{E}^{\vartheta(T)}_\circ$ be a topological circular embedding of $T$ with $\vartheta(T)$ crossings. While reducing at each step the number of crossings by one, we process $T$ level by level bottom-up. In particular, for each $i = h-1, h-2, \dots, 0$, we first modify the embedding of $T$ in such a way that each vertex of $V_i$ assumes a configuration that we call ``rainbow'' and we then bring all vertices of $V_i$ and edges of $E_i$ to a configuration that we call ``full-rainbow''. When all levels of the tree are processed, the number of crossings will be exactly $\vartheta'(T)$ and each edge of $T$ will have exactly one subdivision vertex. 

A vertex $v$ is embedded as a \emph{rainbow} if: (i) the edges that lead to its children are subdivided exactly once, where their first leg 
is assigned to the outer region and their second leg is assigned to the inner region and (ii) the subdivision vertex of such edges immediately precede $v$ in clockwise order.
See \cref{fig:semi-rainbow} for an example of vertices embedded as rainbows.

\begin{figure}[tb]
   \hfill
    \begin{subfigure}[b]{0.40\textwidth}
      \centering
      \includegraphics[page=1,width=\textwidth]{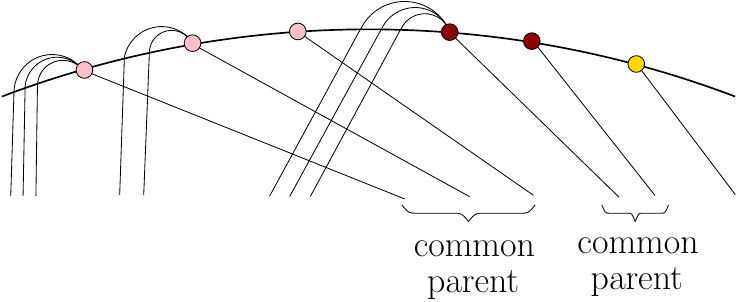}
      \subcaption{}\label{fig:semi-rainbow}
    \end{subfigure}\hfill
    \begin{subfigure}[b]{0.40\textwidth}
    
    \includegraphics[page=2,width=\textwidth]{rainbow.pdf}
      \subcaption{}\label{fig:rainbow}
    \end{subfigure}
    \hfill\hfil
    \caption{Examples of vertices embedded as rainbows (a) and a full-rainbow configuration (b).}
    \label{fig:semi-rainbow-and-rainbow}
\end{figure}
The vertices of $V_i$ are in a \emph{full-rainbow configuration} if: (i) the edges of $E_i$ are planarly drawn with only two legs, the first external and the second internal to $\mathcal{C}$ and (ii) the crossing of an edge $(u,v)$ of $E_i$ with $\mathcal{C}$ clockwise precedes all the vertices of $V_i$. See \cref{fig:rainbow} for an example of a full-rainbow configuration.
Observe that in a full-rainbow configuration of $V_i$ the first legs of edges of $E_i$ do not cross each other and no edge of $E_i$ crosses an edge of $E_{i-1}$. 

There are as many full-rainbow configurations of $V_i$ as there are 
permutations of the vertices in $V_i$. In particular, if we have a full-rainbow configuration and we swap the position of two consecutive vertices $v_x$ and $v_y$, we can obtain another full-rainbow configuration by swapping the position of the crossings with $\mathcal{C}$ of the edges of $E_i$ incident to $v_x$ with those of $E_i$ incident to $v_y$, and propagating the swap to the descendants of $v_x$ and $v_y$.  Hence, we obtain the following lemma:

\begin{restatable}{lemma}{leRainbow}\label{le:rainbow}
Let $T$ be a tree of height $h$. Let $0 \leq i \leq h$ and assume that for all $j \geq i$ the vertices of $V_j$ are in a full-rainbow configuration. There exists a topological circular embedding $\mathcal{E}_\circ$ of $T$ such that the only crossings among edges in $E_i \cup E_{i+1} \cup \dots E_{h-1}$ occur between edges in $E_j$ and $E_{k}$ with $k > j+1$.  
\end{restatable}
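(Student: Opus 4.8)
The plan is to argue by downward induction on the level index and to exploit the structural guarantees that a full-rainbow configuration provides, namely that within each level $E_j$ the first legs are planar and that no edge of $E_j$ crosses an edge of $E_{j-1}$. First I would fix the hypothesis: for every $j\geq i$ the vertices of $V_j$ are in a full-rainbow configuration, so each such edge is subdivided once, with its first leg outside $\mathcal{C}$ and its second leg inside. I would then read off the crossing behaviour leg-by-leg. Since the second legs of edges in any $E_j$ lead to the vertices of $V_{j+1}$, and since in a full-rainbow configuration every crossing of an $E_j$-edge with $\mathcal{C}$ clockwise precedes all of $V_j$, the first legs of $E_j$ cannot cross anything (they occupy a contiguous outer arc preceding $V_j$, and distinct levels use disjoint point blocks $P_j$). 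Hence every crossing among $E_i\cup\dots\cup E_{h-1}$ must involve at least one second leg.

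Next I would dispose of the two ``local'' cases. Two edges of the same $E_j$ do not cross: their first legs are planar by definition of full-rainbow, and their second legs end at distinct vertices of $V_{j+1}$ which (since $V_{j+1}$ is also in full-rainbow configuration) are reached without interference — this is exactly the statement that in a full-rainbow configuration of $V_j$ the first legs of $E_j$ do not cross and, combined with the observation already recorded in the text, no edge of $E_j$ crosses an edge of $E_{j-1}$. Applying the latter with $j$ and $j+1$ in the roles $j-1$ and $j$ kills all $E_j$–$E_{j+1}$ crossings. So no crossing occurs between consecutive levels or within a level, which leaves only pairs $E_j$, $E_k$ with $|j-k|\geq 2$, i.e. $k>j+1$ after renaming, as claimed.

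The one remaining point is to make sure a topological circular embedding $\mathcal{E}_\circ$ with these properties actually exists rather than merely being consistent — that is, that running Algorithm \textsc{Untangle} through levels $h-1,h-2,\dots,i$ yields a valid embedding (two subdivisions become one per edge of $E_i\cup\dots\cup E_{h-1}$, legs correctly assigned to inner/outer pages, each edge crossing the other edge at most once). Here I would invoke the construction of Phase 1 of \textsc{Untangle}: processing level by level bottom-up, each level is first brought to a rainbow and then to a full-rainbow configuration, and the freedom among full-rainbow configurations (permuting $V_i$ by swapping consecutive vertices and propagating to descendants, as described before the lemma) shows the target configuration is reachable. The main obstacle I expect is bookkeeping rather than depth: one must verify that bringing level $i$ (and the levels above it) into full-rainbow configuration does not re-introduce crossings between $E_j$ and $E_{j+1}$ that were previously removed, and does not violate the ``at most one crossing per edge pair'' requirement; this is handled by observing that the swaps used to realize a given permutation only relocate crossings with $\mathcal{C}$ and the induced crossings among second legs, never creating a second crossing between an already-crossing pair, and never touching the first-leg arcs that sit on the outer side before $V_j$.
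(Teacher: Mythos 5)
Your proof is correct and follows essentially the same route as the paper, which gives no separate proof for this lemma but derives it directly from clause (i) of the definition of a full-rainbow configuration (the edges of each $E_j$ are drawn planarly, so there are no intra-level crossings), from the recorded observation that no edge of $E_j$ crosses an edge of $E_{j-1}$ (applied at level $j+1$ to eliminate consecutive-level crossings), and from the page and point-block separation of the first legs. Your closing paragraph on reachability via \textsc{Untangle} and on the permutation freedom of full-rainbow configurations matches the discussion the paper places immediately before the lemma.
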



Note that for $i = 0$ \cref{le:rainbow} yields a topological circular embedding $\mathcal{E}_\circ$ of $T$ where  the edges of the same $E_i$ do not cross and edges of $E_i$ and $E_{i+1}$ do not cross either. As indicated in~\cref{le:theta-prime}, such an embedding has  exactly $\vartheta'(T)$ crossings.

Suppose to have a topological circular embedding $\mathcal{E}^{\chi}_\circ$ of $T$ such that for a given $0 \leq i \leq h-1$ it holds that: 
(i) for all $j > i$ the vertices of $V_j$ are in a full-rainbow configuration; 
(ii) for all $j \leq i$ the vertices of $V_j$ are embedded as in $\mathcal{E}^{\vartheta(T)}_\circ$; and
(iii) for all $j < i$ the edges of $E_j$ are embedded as in $\mathcal{E}^{\vartheta(T)}_\circ$.
Observe that for $i = h-1$ such a drawing is trivially $\mathcal{E}^{\vartheta(T)}_\circ$. First, we reduce the number of crossings one by one moving all vertices of $V_i$ to a rainbow configuration. Second, we reduce the crossings one by one moving $V_i$ and $E_i$ to a full-rainbow configuration. 
For the first target, recall that in $\mathcal{E}^{\vartheta(T)}_\circ$ a second leg of an edge in $E_i$ is embedded on the outer region and that the second legs of edges in $E_i$ do not cross. Consider a non-leaf vertex $u$ of $V_i$ and one of its edges $e=(u,v) \in E_i$. If $u$ has more than one child in $T$, then consider the edge $e \in E_i$ such that the first subdivision vertex $x$ of $e$ is closer to $u$ (refer to \cref{fig:disentangle}). There are three cases:
Case 1: If the last point of $P_i$ before $x$ in clockwise order is used by $u$ (see \cref{fig:disentangle-a}), then the first subdivision vertex of the edges joining $u$ with its children can be removed, and $u$ can be joined directly to their second subdivision vertices on the outer region without changing the total number of crossings (see \cref{fig:disentangle-b}). This yields an embedding of $u$ as a rainbow. 
Case 2: If the last point of $P_i$ before $x$ in clockwise order is used by a leaf $w$ (see \cref{fig:disentangle-c}), then we can swap the position of $w$ and $x$, decreasing the number of crossings by one, having removed the crossing that was between the edge incident to $w$ and $e$ (see \cref{fig:disentangle-d}). This swap operation can be repeated with the other subdivision vertices $x'$, $x''$, \dots of the edges joining $u$ with its children, reducing each time the number of crossings by one (see, for example, \cref{fig:disentangle-e}).
Case 3: If the last point of $P_i$ before $x$ in clockwise order is used by a non-leaf vertex $w$, then we can assume that $w$ is embedded as a rainbow (see \cref{fig:disentangle-f}). Indeed, if this was not the case we can process $w$ before $u$. We move $w$, together with the first subdivision vertices of the edges joining $w$ with its children, clockwise after $x$ (see \cref{fig:disentangle-g}). In doing so, we reduce the number of crossings by $\deg(w)$ instead of one unit (compare \cref{fig:disentangle-f,fig:disentangle-g} where $(u,v)$ loses four crossings). However, since the vertices of $V_{i+1}$ are in a full-rainbow configuration, they can be permuted and, in particular, $v$ can be moved counter-clockwise so that the third leg of $e$ crosses $\children(w)$ edges of $E_i$ and the total number of crossings is reduced by one unit only (see \cref{fig:disentangle-h}). Successively, we move $v$ back to its original position reducing the crossings of one unit at each swap (\cref{fig:disentangle-i} shows the first step of such a process). 
We iterate these operations until none of the above moves apply, i.e., until all the vertices of $V_i$ are in a rainbow configuration.

\begin{figure}[tb]
    \hfill
    \begin{subfigure}[b]{0.15\textwidth}

      \centering
      \includegraphics[page=1,scale=0.4]{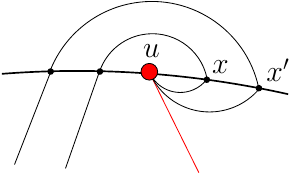}
      \subcaption{}
      \label{fig:disentangle-a}
    \end{subfigure}
    \hfill
    \begin{subfigure}[b]{0.15\textwidth}
    
    \includegraphics[page=2,scale=0.4]{disentangle.pdf}
      \subcaption{}
      \label{fig:disentangle-b}
    \end{subfigure}
    \hfill
    \begin{subfigure}[b]{0.18\textwidth}

    \includegraphics[page=3,scale=0.4]{disentangle.pdf}
      \subcaption{}
      \label{fig:disentangle-c}
    \end{subfigure}
    \hfill
    \begin{subfigure}[b]{0.18\textwidth}
    
    \includegraphics[page=4,scale=0.4]{disentangle.pdf}
      \subcaption{}
      \label{fig:disentangle-d}
    \end{subfigure}
    \hfill
    \begin{subfigure}[b]{0.18\textwidth}
    
    \includegraphics[page=5,scale=0.4]{disentangle.pdf}
      \subcaption{}
      \label{fig:disentangle-e}
    \end{subfigure}
    \hfill
    \begin{subfigure}[b]{0.24\textwidth}
    
    \includegraphics[page=6,scale=0.4]{disentangle.pdf}
      \subcaption{}
      \label{fig:disentangle-f}
    \end{subfigure}
    \hfill
    \begin{subfigure}[b]{0.24\textwidth}
    
    \includegraphics[page=7,scale=0.4]{disentangle.pdf}
      \subcaption{}
      \label{fig:disentangle-g}
    \end{subfigure}
    \hfill
    \begin{subfigure}[b]{0.24\textwidth}
    
    \includegraphics[page=8,scale=0.4]{disentangle.pdf}
      \subcaption{}
      \label{fig:disentangle-h}
    \end{subfigure}
    \hfill
    \begin{subfigure}[b]{0.24\textwidth}
    
    \includegraphics[page=9,scale=0.4]{disentangle.pdf}
      \subcaption{}
      \label{fig:disentangle-i}
    \end{subfigure}
    
    \caption{Illustrations for Algorithm \textsc{Untangle}.}
    \label{fig:disentangle}
\end{figure}

We now describe how to change the embedding of $T$ in such a way that the vertices in $V_i$ and the edges in $E_i$ are in a full-rainbow configuration. 
Recall that, since each vertex of $V_i$ has a rainbow configuration, each edge of $E_i$ has a single subdivision vertex. Consider the first subdivision vertex $x$, in clockwise order, of an edge $e \in E_i$ that is preceded by a vertex $u \in V_i$. Move $x$ before $u$ and before all subdivision vertices of the edges joining $u$ with its children. Observe that this change reduces the number of crossings by one (the crossing between $e$ and the edge of $E_{i-1}$ incident to $u$). By iterating this process we have that no subdivision vertex of $E_i$ is preceded by a vertex of $V_i$ in $P_i$, i.e., the vertices of $V_i$ are in a full-rainbow configuration.

\begin{restatable}[\restateref{le:theta-prime}]{lemma}{leThetaPrime}\label{le:theta-prime}
Following the completion of Phase 1, when  all $V_i$, $0 \leq i \leq h$, are in full-rainbow configuration, the total number of crossings is $\vartheta'(T)$.  
\end{restatable}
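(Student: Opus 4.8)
The plan is to count directly the crossings of the topological circular embedding $\mathcal{E}_\circ$ reached at the end of Phase~1, in which every $V_i$ is in a full-rainbow configuration. By the definition of full-rainbow, in $\mathcal{E}_\circ$ every edge of $E$ is split into exactly two legs: a \emph{first leg}, lying in the outer region and joining the parent endpoint (in some $V_i$) to the subdivision vertex, and a \emph{second leg}, lying in the inner region and joining the subdivision vertex to the child endpoint (in $V_{i+1}$). First I would apply \cref{le:rainbow} with $i=0$ (whose hypotheses $\mathcal{E}_\circ$ satisfies): the only crossings in $\mathcal{E}_\circ$ are between an edge of $E_j$ and an edge of $E_k$ with $k>j+1$; in particular no two edges of a common level cross, and no two edges of consecutive levels cross. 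I would then observe that a first leg and a second leg can never cross (they lie in different pages), and that no two first legs cross either: the first legs of a common $E_j$ do not cross by the full-rainbow definition, while the first leg of an edge of $E_j$ has both endpoints in the block $P_j$, so two first legs coming from different levels have their endpoints in disjoint arcs of $\mathcal{C}$. Hence every crossing of $\mathcal{E}_\circ$ is a crossing of two second legs, and since a topological circular embedding has at most one crossing per edge pair, it remains only to decide, for each pair $e\in E_j$, $e'\in E_k$ with $j<k$ and $k\ge j+2$, whether their second legs cross.

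The main step is to show that they always do. Write $e=(u,v)$ with $u\in V_j\subseteq P_j$, $v\in V_{j+1}\subseteq P_{j+1}$ and subdivision vertex $x\in P_j$, and similarly $e'=(u',v')$ with $u'\in V_k\subseteq P_k$, $v'\in V_{k+1}\subseteq P_{k+1}$, $x'\in P_k$. Since $k\ge j+2$, the four blocks $P_j,P_{j+1},P_k,P_{k+1}$ are pairwise distinct, and the full-rainbow property places $x$ before $u$ inside $P_j$ and $x'$ before $u'$ inside $P_k$. Using the fixed clockwise arrangement of $P_0,\dots,P_h$ along $\mathcal{C}$ (the order $P_0,P_2,P_4,\dots,P_1,P_3,P_5,\dots$), a short case analysis on the parities of $j$ and $k$ pins down the clockwise cyclic order of the six points $x,u,x',u',v,v'$; in each case one checks that exactly one of $x',v'$ lies on the clockwise arc from $x$ to $v$, i.e.\ the second legs $(x,v)$ and $(x',v')$ interleave on $\mathcal{C}$ and therefore cross. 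I expect this to be the main (if only mildly tedious) obstacle, since it is the one place where the precise choice of the block order along $\mathcal{C}$ and of the full-rainbow placement are genuinely used; care is needed with the wrap-around of $\mathcal{C}$ and with the separate treatment of $h$ even and $h$ odd.

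Finally I would do the count. Each of the pairs $e\in E_j$, $e'\in E_k$ with $k\ge j+2$ contributes exactly one crossing and nothing else contributes, so, using $\sum_{i=0}^{h-1}|E_i|=|E|$,
\[
\#\text{crossings}=\sum_{0\le j<k\le h-1}|E_j|\,|E_k|\;-\;\sum_{j=0}^{h-2}|E_j|\,|E_{j+1}|
=\tfrac12\Bigl(|E|^2-\sum_{i=0}^{h-1}|E_i|^2\Bigr)-\sum_{i=0}^{h-2}|E_i|\,|E_{i+1}|=\vartheta'(T).
\]
This is exactly the stated value, and it also explains a posteriori the definition of $\vartheta'(T)$ as the crossing number of the full-rainbow embedding.
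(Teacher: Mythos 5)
Your proposal is correct, and its final count is the paper's count organized differently: the paper sums over each level $i$ the quantity $|E_i|\bigl(|E|-|E_{i-1}|-|E_i|-|E_{i+1}|\bigr)$ and halves it, whereas you sum directly over unordered pairs of levels; the two expressions are algebraically identical. The genuine difference lies in what is proved versus merely asserted. The paper's proof consists of the single claim that ``each edge of $E_i$ crosses all edges of the tree but those of $E_{i-1}$, $E_i$ and $E_{i+1}$'' followed by the algebra; it never argues why a pair $e\in E_j$, $e'\in E_k$ with $k\ge j+2$ actually \emph{does} cross (only the converse direction is covered, by \cref{le:rainbow} and the observation following the definition of a full-rainbow configuration). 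You correctly isolate this as the main content of the lemma and supply the missing argument: the inner legs of $e$ and $e'$ have their four endpoints in the pairwise distinct blocks $P_j,P_{j+1},P_k,P_{k+1}$, and since the blocks are arranged along $\mathcal{C}$ with all even indices first (in increasing order) followed by all odd indices (in increasing order), a check of the four parity cases shows that the two pairs of blocks always separate each other cyclically, so the legs interleave and cross; the check is sound in every case, including the mixed-parity one where $k\ge j+2$ forces $k\ge j+3$ so that $P_{j+1}$ and $P_k$ stay distinct. Your write-up is therefore more complete than the paper's. Two cosmetic remarks: \cref{le:rainbow} is phrased as an existence statement, so you should say explicitly that the embedding it exhibits is the one reached at the end of Phase~1; and the non-crossing of first legs and of edges at equal or consecutive levels is already recorded in the paper as an observation attached to the full-rainbow definition, so you could cite that rather than re-derive it.
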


\medskip
\noindent\textbf{Phase 2: reducing the crossings to zero.}
When all the levels of the tree are in a full-rainbow configuration, all edges are drawn with a single subdivision vertex. Consider an edge $(u,v)$ incident to a vertex $v \in V_h$ such that $v$ is the last vertex clockwise of $V_h$. Observe that $v$ is a leaf. Denote by $x$ the subdivision vertex of $(u,v)$. Vertex $v$ can be moved clockwise towards $x$ removing one crossing at each step (see, for example, \cref{fig:phase-2} in the Appendix). When $v$ reaches $x$, vertex $v$ can be transferred close to its parent $u$ and identified with it (i.e., $v$ can be planarly embedded arbitrarily close to $u$ and moved with $u$ from now on). Iteratively performing this operation on the last vertex of the last level of $T$ reduces the number of crossings down to zero, thus we have:

\begin{restatable}[\restateref{le:tree-topological-linear-embedding-unefficient}]{lemma}{leDisentangle}\label{le:tree-topological-linear-embedding-unefficient}
Any $n$-vertex tree $T$ admits a topological linear embedding with $\chi$ edge crossings,   $\chi \in [0..\vartheta(T)]$, where every edge traverses the spine at most twice. Such an embedding can be computed in $O(n^3)$ time. 
\end{restatable}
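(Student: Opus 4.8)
The plan is to assemble the pieces already developed in this section into a single statement. First I would start from Algorithm \textsc{Tangle}, which by \cref{le:Entangle} computes in $O(n)$ time a topological circular embedding $\mathcal{E}_\circ^{\vartheta(T)}$ of $T$ with exactly $\vartheta(T)$ crossings and two subdivision vertices per edge; this handles the extreme case $\chi=\vartheta(T)$, and since crossings and spine traversals are preserved when we read the circular ordering as a linear one, it also yields a topological linear embedding with the desired properties.

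Next I would run Algorithm \textsc{Untangle}. The key observation is that every elementary move described in Phase~1 and Phase~2 (the three cases producing the rainbow configuration, the move producing the full-rainbow configuration, and the leaf-sliding step of Phase~2) reduces the total number of crossings by \emph{exactly one}: in Case~3 of Phase~1 this is the content of the correction step in which $v$ is moved counter-clockwise and then back, compensating the $\deg(w)$-drop so that the net effect per swap is a decrease of one. Moreover, none of these moves ever creates a new subdivision vertex: Phase~1 only removes subdivision vertices (bringing each edge from two subdivisions down to one once its level is processed), and Phase~2 operates on singly-subdivided edges and identifies leaves with their parents. Hence at every intermediate step every edge has at most two spine traversals. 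Since by \cref{le:theta-prime} Phase~1 terminates at exactly $\vartheta'(T)$ crossings and Phase~2 drives the count down to $0$ one unit at a time, every integer value $\chi\in[0..\vartheta(T)]$ is attained as the crossing count of some intermediate embedding; stopping the process at the first moment the count equals $\chi$ yields a topological circular — hence topological linear — embedding $\mathcal{E}_\circ^{\chi}$ of $T$ with exactly $\chi$ crossings and at most two subdivisions per edge.

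For the running time, I would argue crudely: there are at most $\vartheta(T)\in O(n^2)$ crossing-removal steps in total, and each elementary move — identifying the relevant points of $P_i$, relocating a constant number of vertices or subdivision vertices, and propagating a swap through the descendants in the full-rainbow part of the tree — can be carried out in $O(n)$ time on an explicit representation of the embedding (the spine as a list, with the page assignment of each leg). This gives the claimed $O(n^3)$ bound, and combined with the $O(n)$ cost of \textsc{Tangle} the total is $O(n^3)$.

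The main obstacle I expect is not in the high-level composition but in verifying carefully that each move decreases the crossing number by exactly one and leaves the invariants (i)--(iii) preceding \cref{le:theta-prime} intact for the next level — in particular the bookkeeping in Case~3, where one must check that moving $v$ counter-clockwise past $\children(w)$ edges of $E_i$ and then swapping it back, step by step, neither overshoots nor disturbs the full-rainbow configuration of $V_{i+1}$ and the already-processed deeper levels. Once that local analysis is in hand (it is essentially the justification of \cref{le:theta-prime} together with the figure-by-figure description of \cref{fig:disentangle}), the existence of an embedding for every $\chi$ and the $O(n^3)$ time bound follow immediately.
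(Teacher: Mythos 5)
Your proposal is correct and follows essentially the same route as the paper's proof: apply \textsc{Tangle}, then peel off crossings one at a time via Phases~1 and~2 of \textsc{Untangle}, stopping at $\chi$, with $O(n^2)$ iterations at $O(n)$ each. The only part you defer --- that propagating a vertex move through all descendant levels while preserving the full-rainbow invariant costs only $O(n)$ per iteration --- is exactly where the paper invests its effort, introducing explicit sub-sequences $S^v_l$ and $V^v_l$ of spine traversals and descendants that are relocated and recomputed in linear time.
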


\subsection{Algorithm \textsc{Prune-Tangle-Untangle}}

In this section we describe an algorithm, 
called
\textsc{Prune-Tangle-Untangle}, that reduces the time complexity of \cref{le:tree-topological-linear-embedding-unefficient} to $O(n^2)$.  
The algorithm uses the  \textsc{Tangle} and \textsc{Untangle} algorithms. However, before these, it launches the \textsc{Prune} procedure illustrated below. 

Let $T_n$ be the input $n$-vertex tree. The leaves of $T_n$ are recursively removed (in any order) obtaining trees $T_{n-1}$, $T_{n-2}$, \dots, while $\vartheta(T_{n-i}) > \chi$. Note that an edge $(u,v)$ of an $n$-vertex tree $T_n$, incident to a leaf $v$, may cross all other $n-2$ edges of $T_n$ except the $\deg(u) -1$ edges incident to $u$. Hence, removing $(u,v)$ produces a tree $T_{n-1}$ where the number of possible crossings decreases by a quantity in $[0..n]$. To efficiently identify the leaves and compute the current number of crossings, we equip vertices with their degrees and maintain a list of all the leaves. When a leaf is removed from the tree (and from the list) the degree of its parent is decreased, and if it becomes one, the parent is added to the list of leaves. 
We continue until we encounter a leaf that cannot be removed. Thus, it holds that $\vartheta(T_{n-i}) < \chi+n$.
Then, we launch Algorithm \textsc{Tangle} on $T_{n-i}$ to obtain $\mathcal{E}^{\vartheta(T_{n-i})}_\circ$ of $T_{n-i}$ and then perform $\vartheta(T_{n-i}) - \chi$ iterations of \textsc{Untangle} to obtain $\mathcal{E}^{\chi}_\circ$ of $T_{n-i}$. Finally, the removed edges are planarly added in reverse order as they were removed, iteratively placing them along $\mathcal{C}$ at a small enough distance from their parent, to obtain a topological circular embedding of $T$ with $\chi$ crossings. 
The above description implies the following:

\begin{theorem}\label{le:tree-topological-linear-embedding}
Any $n$-vertex tree $T$ admits a topological linear embedding with $\chi$ edge crossings, $\chi \in [0..\vartheta(T)]$, where every edge traverses the spine at most twice. Such an embedding can be computed in $O(n^2)$ time. 
\end{theorem}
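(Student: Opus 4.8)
The plan is to establish \cref{le:tree-topological-linear-embedding} by carefully bounding the running time of Algorithm \textsc{Prune-Tangle-Untangle}, since its correctness is already essentially provided by \cref{le:tree-topological-linear-embedding-unefficient} together with the observation that planarly reinserting the pruned leaf-edges near their parents does not create crossings and yields a topological circular embedding of $T$ with $\chi$ crossings (and, via the linear/circular correspondence recalled in \cref{sec:trees}, a topological linear embedding with the same crossing number and at most two spine traversals per edge). So the proof reduces to a three-part time analysis, one per phase of the algorithm.

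First I would analyze the \textsc{Prune} procedure. Maintaining a doubly-linked list of current leaves and an array of vertex degrees, each leaf removal takes $O(1)$ time: we delete the leaf from the list, decrement its parent's degree, and possibly append the parent. We must also keep the running value of $\vartheta(T_{n-i})$ updated; by the formula $\vartheta(G) = (m(m+1) - \sum_v \deg^2(v))/2$, removing an edge $(u,v)$ incident to a leaf $v$ changes $m$ by $-1$ and changes only $\deg(u)$ and $\deg(v)$, so the new thrackle bound is obtained from the old one by an $O(1)$ arithmetic update (decrease by a quantity in $[0..n]$, as already noted in the text). We stop at the first leaf we cannot remove, i.e. when removing it would bring the thrackle bound strictly below $\chi$; hence $\vartheta(T_{n-i}) \ge \chi$, and since the last removed edge would have decreased it by at most $n$ we also have $\vartheta(T_{n-i}) < \chi + n$. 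The whole \textsc{Prune} phase runs in $O(n)$ time. The key point extracted here is $\vartheta(T_{n-i}) - \chi < n$.

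Second, I would invoke \cref{le:Entangle}: running \textsc{Tangle} on $T_{n-i}$ costs $O(n)$ time and produces $\mathcal{E}^{\vartheta(T_{n-i})}_\circ$ with two subdivision vertices per edge. Then we run exactly $\vartheta(T_{n-i}) - \chi$ iterations of \textsc{Untangle}. This is the crux of the improvement: by the pruning bound this number of iterations is at most $n-1 = O(n)$, rather than the $O(n^2)$ worst case faced in \cref{le:tree-topological-linear-embedding-unefficient}. Each elementary crossing-removing step of \textsc{Untangle} (a vertex/subdivision-vertex swap along $\mathcal{C}$, possibly with a bounded propagation to descendants as in Case 3) can be implemented in $O(n)$ time on the data structure representing the spine ordering and the page assignment, so this phase costs $O(n) \cdot O(n) = O(n^2)$ time. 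Finally, reinserting the $i$ pruned edges in reverse order, each placed on $\mathcal{C}$ arbitrarily close to its (already placed) parent, costs $O(1)$ per edge, i.e. $O(n)$ overall, and introduces no new crossing because a leaf-edge drawn in a sufficiently small neighborhood of its parent crosses nothing. Summing, the total time is $O(n) + O(n) + O(n^2) + O(n) = O(n^2)$, and converting the resulting topological circular embedding to a topological linear one preserves the crossing count and the bound of two spine traversals per edge, completing the proof.

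The main obstacle I anticipate is not the counting of iterations (which follows cleanly from the pruning invariant) but rather arguing rigorously that each iteration of \textsc{Untangle} is genuinely an $O(n)$-time operation in all three cases, and in particular that Case 3 — which moves a rainbow-embedded vertex $w$ together with its $\children(w)$ subdivision vertices clockwise past $x$, then compensates by permuting the full-rainbow level $V_{i+1}$ and sliding $v$ back one swap at a time — still results in a net change that is accounted for correctly within the claimed per-step cost, so that a single "crossing reduction by one" is not secretly hiding $\Theta(n)$ sub-steps each costing $\Theta(n)$. I would handle this by charging the bounded bursts of sub-swaps in Case 3 to the vertex $w$ (whose total contribution over the run of \textsc{Untangle} at level $i$ is bounded by its degree, hence $O(n)$ amortized over the level, and $O(n^2)$ over all levels), which keeps the overall bound at $O(n^2)$; alternatively one observes that the number of crossings only ever decreases, so the total number of elementary swaps across the entire execution is at most $\vartheta(T_{n-i}) - \chi + O(\sum_v \deg v) = O(n)$, and each swap touches $O(n)$ spine positions.
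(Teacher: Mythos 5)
Your proposal is correct and follows essentially the same route as the paper: prune leaves until $\vartheta(T_{n-i})-\chi < n$, run \textsc{Tangle} and then only $O(n)$ iterations of \textsc{Untangle} at $O(n)$ time each (the per-iteration cost being exactly what the appendix proof of \cref{le:tree-topological-linear-embedding-unefficient} establishes via the sub-sequence bookkeeping), and finally reinsert the pruned leaf-edges planarly next to their already-placed parents. Your extra worry about the compound sub-swaps of Case~3 is resolved the same way the paper resolves it, namely by charging every elementary swap to a unit decrease of the crossing count, so the two arguments coincide.
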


\section{Topological Linear Embeddings of Paths}\label{sec:paths} 

In 
the first part of this section we present an algorithm, which we call \emph{$\chi$-Spiral} that, given an $n$-vertex ($n \geq 4$) path $P_n$ 
and any integer $\chi$ such that $(n-3)(n-4)/2 < \chi \leq (n-2)(n-3)/2$ computes a spine traversal free, one-page topological linear embedding $\Gamma$ with $\chi$ crossings of $P_n$. In all topological linear embedding constructed in this section each edge is mapped to just one leg. Hence, we refer to edges and not to legs.
Notice also that $(n-3)(n-4)/2 = \vartheta(P_{n-1})$ is exactly the maximum number of crossings for a drawing of a path with $n-1$ vertices. 

Algorithm $\chi$-Spiral consists of two steps.

\noindent({\bf Step 1}) Compute a spine traversal free, one-page topological linear embedding $\Gamma$ with $\chi'=\vartheta(P_n)=(n-2)(n-3)/2$ crossings of $P_n$.
Let $v_1, \dots, v_n$ be the vertices of $P_n$ each with a subscript that corresponds to the order in which it appears by visiting $P_n$ from any of its vertices of degree one to the other. Place on the spine of $\Gamma$ first the vertices with odd subscript, in increasing order of subscript, and then the vertices with even subscript, in increasing order of subscript, too. Hence, the order of the vertices on the spine of $\Gamma$ is $v_1 \prec v_3 \prec \dots \prec v_{n-1} \prec v_2 \prec v_4 \prec \dots \prec v_n$ if $n$ is even and $v_1 \prec v_3 \prec \dots \prec v_n \prec v_2 \prec v_4 \prec \dots \prec v_{n-1}$ if $n$ is odd.
Then, assign all the edges of $P_n$ to the same page, say top page, of $\Gamma$.
The linear embedding computed by Step 1 of Algorithm $\chi$-Spiral for path $P_7$ is shown in \cref{fig:spiral}.

\begin{figure}[tb]
    \begin{subfigure}[b]{0.29\textwidth}
      \centering
      \includegraphics[page=1,width=\textwidth]{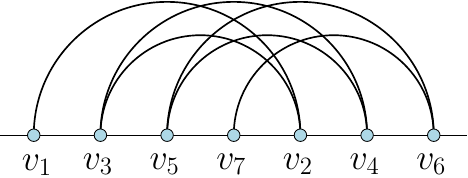}
      \subcaption{}
      \label{fig:spiral}
    \end{subfigure}
    \hfill
    \begin{subfigure}[b]{0.29\textwidth}
    
    \includegraphics[page=2,width=\textwidth]{paths.pdf}
      \subcaption{}
      \label{fig:chi-spiral}
    \end{subfigure}
    \hfill
    \begin{subfigure}[b]{0.38\textwidth}
    
    \includegraphics[page=3,width=\textwidth]{paths.pdf}
      \subcaption{}
      \label{fig:prescribed-spiral}
    \end{subfigure}
    \caption{Topological linear layouts of paths. (a) A linear layout of $P_7$ with $10$ crossings. (b) Linear layout obtained moving vertices $v_1$ and $v_7$ in the one in \cref{fig:spiral}. $\Delta_\chi=2$ and $k=1$. (c) Linear layout computed for path $P_9$ and $\chi=8$.}
    \label{fig:paths}
\end{figure}

\noindent({\bf Step 2}) Let $\Delta_\chi = \chi'- \chi$ 
be the ``excess of crossings'' that $\Gamma$ has with respect to the target $\chi$. Observe that $\Delta_\chi < n-3$. Now, perform what follows. If $\Delta_\chi=2k+1$ is odd, then move vertex $v_1$ by $k+1$ positions on the spine to the right, leaving the position of all the other vertices unchanged. Else, ($\Delta_\chi=2k$ is even) move vertex $v_1$ on the spine by $k$ positions to the right, leaving the position of all the other vertices unchanged. Then, as a last action of the even case, move vertex $v_n$ to the left by one position. The assignment of all the edges to the same page remains unchanged.
The topological linear embedding computed by Algorithm $\chi$-Spiral for path $P_7$ and $\chi=8$ (i.e., $\Delta_\chi =2)$ is shown in \cref{fig:chi-spiral}.

\begin{restatable}[\restateref{le:path-chi-spiral}]{lemma}{lePathChiSpiral}\label{le:path-chi-spiral}
Let $P_n$ be an $n$-vertex ($n \geq 4$) path and let $\chi$ be an integer  such that $(n-3)(n-4)/2 < \chi \leq (n-2)(n-3)/2$. Algorithm \emph{$\chi$-Spiral} computes a spine traversal free, one-page topological linear embedding $\Gamma$ with $\chi$ crossings of $P_n$ in $O(n)$ time.   
\end{restatable}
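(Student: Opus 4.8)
The plan is to establish two facts about the construction: first that the linear embedding produced by Step~1 has exactly $\chi' = \vartheta(P_n) = (n-2)(n-3)/2$ crossings, and second that each unit of the shift in Step~2 removes exactly one crossing, so that after Step~2 the embedding has $\chi' - \Delta_\chi = \chi$ crossings; the $O(n)$ running time is then immediate since the construction only places $n$ vertices on the spine and assigns $n-1$ edges to one page.

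For Step~1, I would argue that the chosen spine order makes \emph{every} pair of non-adjacent edges cross. Label the edges $e_i = (v_i, v_{i+1})$ for $i = 1, \dots, n-1$. With all edges on the top page, two edges $e_i$ and $e_j$ ($i < j$) cross iff their endpoints interleave along the spine. Using the order $v_1 \prec v_3 \prec \dots \prec v_2 \prec v_4 \prec \dots$, each edge $e_i$ joins an odd-subscript vertex to an even-subscript vertex (or vice versa), i.e.\ it joins a vertex in the ``first half'' of the spine to one in the ``second half''. Thus every edge spans the midpoint between the odd block and the even block. One then checks by a short case analysis on the parities of $i$ and $j$ that the left endpoints and right endpoints of $e_i$ and $e_j$ always interleave precisely when $e_i$ and $e_j$ are non-adjacent (i.e.\ $j \geq i+2$), and never interleave when $j = i+1$ (adjacent edges share an endpoint and cannot cross). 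Since $P_n$ has $(n-2)(n-3)/2$ pairs of non-adjacent edges, this gives $\chi' = \vartheta(P_n)$ crossings, matching \cref{fig:spiral}.

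For Step~2, the key observation is that moving $v_1$ one position to the right past the next vertex on the spine is a transposition of two consecutive spine positions, and such a transposition changes the number of crossings by exactly $\pm 1$ for the unique pair of edges whose relative interleaving is toggled — namely the edge $e_1 = (v_1,v_2)$ incident to $v_1$ and the edge incident to the vertex $v_1$ hops over. I would verify that, starting from the Step~1 layout, the first $k$ (resp.\ $k$) such hops each \emph{remove} a crossing of $e_1$, accounting for a decrease of $2k$ or $2k$; in the odd case a single extra hop of $v_1$ (the $(k+1)$-st) removes one more, for a total decrease of $2k+1 = \Delta_\chi$; in the even case the final move of $v_n$ one position to the left is again a consecutive transposition that removes exactly one further crossing, for a total decrease of $2k = \Delta_\chi$. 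Here the bound $\Delta_\chi < n-3$ guaranteed in Step~2 is what ensures $v_1$ (and $v_n$) never needs to be moved past a vertex adjacent to it in $P_n$, so every hop genuinely toggles a crossing with a non-adjacent edge rather than doing nothing. Hence the final count is $\chi' - \Delta_\chi = \chi$, as illustrated in \cref{fig:chi-spiral,fig:prescribed-spiral}.

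The main obstacle I expect is the bookkeeping in Step~2: one must confirm that each successive hop of $v_1$ decreases rather than increases the crossing count (i.e.\ that the edges being ``uncrossed'' were genuinely crossed in the Step~1 layout and are not re-crossed by a later hop), and one must handle the parity split (odd vs.\ even $\Delta_\chi$) and the auxiliary move of $v_n$ carefully, including the edge cases where $n$ itself is odd vs.\ even (which changes where $v_1$ and $v_n$ sit in the initial order). The interleaving case analysis for Step~1 is routine but also needs care with parities. Everything else — spine ordering, one-page assignment, linear running time — is straightforward.
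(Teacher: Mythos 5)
Your Step~1 argument is essentially the paper's: a case analysis on the parities of $i$ and $j$ showing that in the order $v_1 \prec v_3 \prec \dots \prec v_2 \prec v_4 \prec \dots$ the endpoints of any two non-adjacent edges interleave, so the one-page layout has exactly $\vartheta(P_n)=(n-2)(n-3)/2$ crossings. That part is fine, modulo actually carrying out the four parity cases.

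The gap is in your Step~2 accounting. You claim that each hop of $v_1$ over the next spine vertex is a transposition that changes the crossing count by exactly one, toggling ``the unique pair of edges whose relative interleaving is toggled''. This is false after the first hop: when $v_1$ hops over $v_{2j+1}$ with $j \geq 2$, the vertex being passed has \emph{two} incident edges, $(v_{2j},v_{2j+1})$ and $(v_{2j+1},v_{2j+2})$, both non-adjacent to $e_1=(v_1,v_2)$, and the interleaving of $e_1$ with \emph{both} is toggled --- both crossings are removed, since after the hop the two endpoints of $e_1$ lie strictly between the endpoints of each of those edges. Only the first hop, over $v_3$, removes exactly one crossing, because $(v_2,v_3)$ is adjacent to $e_1$ and never crossed it. The correct count, which is what the paper establishes, is: the first hop of $v_1$ removes one crossing and every subsequent hop removes exactly two, so $k+1$ hops remove $1+2k$ crossings (odd $\Delta_\chi$), and $k$ hops plus the one-position move of $v_n$ remove $(2k-1)+1=2k$ (even $\Delta_\chi$). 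Your own arithmetic betrays the problem: $k$ hops each removing a single crossing cannot ``account for a decrease of $2k$''. The totals you write down are correct, but the mechanism you give does not produce them; you need the one-crossing-then-two-crossings-per-hop analysis to close the argument.
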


\begin{sketch}
First, we prove that at the end of Step~1 Algorithm $\chi$-Spiral gets a topological linear embedding $\Gamma$ with $\chi'=(n-2)(n-3)/2$ crossings.
Since all the edges of $P_n$ are assigned to the same page, $\Gamma$ is one-page and spine traversal free. We prove that, in $\Gamma$, each edge crosses all the other edges but those incident to its end-vertices.
Consider any two edges $(v_i,v_{i+1})$ and $(v_j,v_{j+1})$ of $P_m$ such that $i$, $j$, $i+1$, and $j+1$ are pairwise different and assume, w.l.o.g., that $i<j$. We prove that $(v_i,v_{i+1})$ and $(v_j,v_{j+1})$ cross in $\Gamma$.
Suppose that $i$ and $j$ are both odd, the other cases being similar. We have that, since $i<j$, $v_i \prec v_j$. 
Also, both $i+1$ and $j+1$ are even, with $i+1<j+1$, hence, $v_{i+1} \prec v_{j+1}$. 
Further, since $j$ is odd and $i+1$ is even, $v_j$ is placed before $v_{i+1}$ on the spine ($v_j \prec v_{i+1}$). 
In summary, $v_i \prec v_j \prec v_{i+1} \prec v_{j+1}$. It follows that $(v_i,v_{i+1})$ and $(v_j,v_{j+1})$ cross. 
The time complexity of Step~1 of the algorithm follows from the construction, which is just a scan of the path.

Concerning Step~2 of the algorithm, we have what follows. 
(1) Because of the range of $\chi$, we have that $\Delta_\chi < n-3$. 
(2) If $\Delta_\chi$ is odd we have $0 \leq k < (n-4)/2$.
If $\Delta_\chi$ is even we have $0 \leq k < (n-3)/2$.
Because of these upper bounds on $k$, after the moves, $v_1$ is still to the left of $v_n$ in $\Gamma$. 
(3) If $v_1$ moves by one position to the right, then it is placed right after $v_3$, hence, the number of crossings in $\Gamma$ decreases by one (namely $(v_1,v_2)$ no longer crosses $(v_3,v_4)$).
(4) If $v_1$ moves by $k+1$ positions to the right then the number of crossings in $\Gamma$ decreases by $1+2k$; e.g.\ if $k=1$ then $v_1$ is positioned between $v_5$ and $v_7$, hence, $(v_1,v_2)$ no longer crosses $(v_3,v_4)$, $(v_4,v_5)$, and $(v_5,v_6)$.
Essentially, the first move of $v_1$ to the right decreases the number of crossings in $\Gamma$ by one, while each of the subsequent moves decreases the number of crossings by exactly two.
Observe also that a move of $v_n$ to the left of one position decreases the number of crossings in $\Gamma$ by one unit.
Since Step~2 involves only a linear number of moves on the spine of $\Gamma$, its time complexity is also linear.
\end{sketch}

Now, we exploit Algorithm $\chi$-spiral and \cref{le:path-chi-spiral} to efficiently produce a drawing of $P_n$ with ``any'' number of crossings, proving the following lemma.

\begin{restatable}{lemma}{lePathTopologicalLinear}\label{le:path-topological-linear}
Let $P_n$ be an $n$-vertex path. For each $\chi \in [0..\vartheta(P_n)]$, $P_n$ admits a one-page topological linear embedding with $\chi$ crossings, that can be computed in $O(n)$ time.
\end{restatable}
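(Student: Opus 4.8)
\textbf{Proof plan for \cref{le:path-topological-linear}.}
The plan is to reduce the general target $\chi \in [0..\vartheta(P_n)]$ to the "top slice" handled by \cref{le:path-chi-spiral} by first deleting an appropriate prefix (or suffix) of vertices from the path. Concretely, $\vartheta(P_n) = (n-2)(n-3)/2$ and, as already observed in the text just before Algorithm $\chi$-Spiral, $\vartheta(P_{m-1}) = (m-3)(m-4)/2$, so the intervals $\big((m-3)(m-4)/2,\ (m-2)(m-3)/2\big]$ for $m = 4, \dots, n$ together with the singleton $\{0\}$ partition $[0..\vartheta(P_n)]$. Hence, given $\chi$, I would first find the unique $m \le n$ with $(m-3)(m-4)/2 < \chi \le (m-2)(m-3)/2$ (treating the degenerate cases $\chi = 0$, and small $m$, separately, where the path has too few vertices to have any crossing and the embedding is trivially planar).

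Next I would apply Algorithm $\chi$-Spiral of \cref{le:path-chi-spiral} to the sub-path $P_m$ induced by the first $m$ vertices $v_1, \dots, v_m$ of $P_n$, obtaining a spine-traversal-free, one-page topological linear embedding $\Gamma_m$ of $P_m$ with exactly $\chi$ crossings in $O(m) = O(n)$ time. Then I would re-insert the removed vertices $v_{m+1}, \dots, v_n$ one at a time: each such $v_{j}$ is attached as a pendant to $v_{j-1}$, so I place $v_{j}$ on the spine immediately next to $v_{j-1}$ (say, just to its right, before any vertex previously inserted next to $v_{j-1}$) and assign the edge $(v_{j-1}, v_j)$ to the top page. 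Placing $v_j$ arbitrarily close to $v_{j-1}$ makes the new edge a tiny arc nested under the spine region immediately adjacent to $v_{j-1}$, so it crosses no existing edge, and since all inserted edges hug distinct spine positions (or are nested consistently) they do not cross each other either. Thus the crossing count stays exactly $\chi$, no spine traversals are introduced, and the embedding remains one-page.

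For correctness I would argue in two parts: (i) the partition of $[0..\vartheta(P_n)]$ into the intervals above is exact — this is the routine identity $(m-2)(m-3)/2 = (m-3)(m-4)/2 + (m-3)$, so consecutive intervals abut and their union is $[1..\vartheta(P_n)]$, with $\chi=0$ handled by the planar embedding of $P_n$ itself; and (ii) re-inserting a degree-one vertex adjacent to an already-placed vertex, on the same page and arbitrarily close to its neighbour, adds no crossing — this is immediate from the crossing condition $a \succ c \succ b \succ d$ of the definition of topological linear embedding, since the new leg spans an interval of the spine containing no endpoint of any other leg. For the running time, finding $m$ costs $O(\log n)$ (or $O(n)$ by a trivial scan), Algorithm $\chi$-Spiral costs $O(n)$ by \cref{le:path-chi-spiral}, and the $n - m$ re-insertions cost $O(1)$ each, for $O(n)$ total.

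The main obstacle I anticipate is purely bookkeeping rather than conceptual: making precise that the re-insertions of several pendant vertices — possibly several attached in a chain $v_m, v_{m+1}, \dots$ down a tail of the path, or several attached to the same vertex — can be simultaneously realized with no mutual crossings. This is handled by always inserting in increasing index order and always placing $v_j$ strictly between $v_{j-1}$ and whatever currently follows $v_{j-1}$ on the spine, so that the tail $v_{m+1}, \dots, v_n$ occupies a consecutive "staircase" of spine positions whose legs are pairwise nested and disjoint from the legs of $P_m$; one should also double-check the boundary case where the removed tail includes an endpoint of the original path so that $P_m$ is still a genuine path. None of this affects the linear time bound.
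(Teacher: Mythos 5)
Your proposal is correct and follows essentially the same route as the paper's proof: find the least $m$ (the paper's $n'$) with $\vartheta(P_{m-1}) < \chi \le \vartheta(P_m)$, run Algorithm $\chi$-Spiral on $P_m$, and attach the remaining $n-m$ vertices as a crossing-free planar tail on the same page (the paper places the tail as a consecutive block next to $v_1$, you place it vertex-by-vertex at the other end — an immaterial difference). Your explicit treatment of the interval partition and of the degenerate case $\chi=0$ is slightly more careful than the paper's, but the argument is the same.
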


\begin{proof}
In order to compute the desired linear embedding, we first look for an integer $n'$ such that $\vartheta(P_{n'-1}) < \chi \leq \vartheta(P_{n'})$. Namely, we look for the shortest path $P_{n'}$ such that Lemma~\ref{le:path-chi-spiral} can be used to obtain a linear embedding of $P_{n'}$ with the desired number of crossings. This is given by $n'=\lceil \frac{5+\sqrt{25-4(6-2 \chi)}}{2} \rceil$.

We then execute Algorithm $\chi$-Spiral to construct a one-page linear embedding $\Gamma'$ with $\chi$ crossings of path $P_{n'}$. 
Once the one-page linear embedding $\Gamma'$ of $P_{n'}$ has been computed, we augment $\Gamma'$ by adding to it a path $P_{n''}$ with $n''=n-n'$ vertices on the same page of $P_{n'}$ just besides vertex $v_1$, positioning its vertices consecutively say before $v_1$ and joining its last vertex with $v_1$. Let $\Gamma$ be the obtained linear embedding. We have that the edges of $P_{n''}$ do not cross any edge, hence the total number of crossings of $\Gamma$ is the same of $\Gamma'$.
As an example, the linear layout computed for path $P_9$ and $\chi=8$ is shown in \cref{fig:prescribed-spiral}. In this case $n'=7$.

The $O(n)$ time bound easily descends from the description of the algorithm.
\end{proof}

\section{Point-set Embeddings with Prescribed Edge Crossings}\label{sec:rac}

 In this section we address the general problem of computing a  point-set embedding of a graph $G$ with prescribed edge crossings and $O(1)$ curve complexity.  
 The following lemma extends and refines  ideas described in the literature for crossing-free point-set embeddings,~\cite{FinkHMSW12,DBLP:journals/jgaa/KaufmannW02}. 

\begin{restatable}[\restateref{le:RAC-pse}]{lemma}{leRACpse}\label{le:RAC-pse}
  Let $G$ be a graph with $n$ vertices and $m$ edges. Let $S$ be a set of $n$ distinct points in the plane. If $G$ admits a topological linear embedding with $\chi$ edge crossings and at most $\sigma$ spine traversals per edge, then the following hold:
    \begin{itemize}
        \item $G$ admits a RAC point-set embedding on $S$ with $\chi$ edge crossings and at most $3(\sigma + 1)$ curve complexity, which  can be computed in $O(m\sigma + n\log n)$ time.
        \item $G$ admits a  point-set embedding on $S$ with $\chi$ edge crossings and at most $2\sigma + 1$ curve complexity, which can be computed in $O(m\sigma + n \log n)$ time. 
    \end{itemize}   
\end{restatable}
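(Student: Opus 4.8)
The plan is to convert the topological linear embedding of $G$ into a point-set embedding on $S$ in two stages: first build a ``canonical'' layout on a horizontal spine where the vertices and spine traversals sit at integer $x$-coordinates on a line, and then deform this layout so that the vertices land exactly on the points of $S$ while keeping the crossing structure and paying only a constant number of extra bends per edge.

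First I would fix a horizontal line $\ell$ and place the $N \le n + m\sigma$ spine objects (the $n$ vertices of $G$ and the at most $m\sigma$ spine traversals) on $\ell$ in the spine order of $\mathcal{E}$, at distinct points. Each leg of a subdivision path is assigned to the top or bottom page; I would draw each leg as a ``wire'' that leaves its left endpoint, goes slightly into its page, runs horizontally, and comes back down to its right endpoint, routed at a height proportional to its nesting depth so that two legs in the same page cross iff their endpoints interleave, exactly matching condition (4) of a topological linear embedding. Two consecutive legs of the same subdivision path meet at a spine traversal, which I place on $\ell$; so a whole edge of $G$ becomes a curve that alternately visits the top and bottom page, crossing $\ell$ exactly $\le \sigma$ times. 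Counting bends: each leg contributes $O(1)$ bends from the ``go up, run across, come down'' shape, but by a careful choice of the drawing (drawing each leg as a single monotone ``staircase'' or as a circular arc as in the preliminaries, and merging the bend at a spine traversal) one gets at most $3$ bends per leg, hence at most $3(\sigma+1)$ bends per edge, and at most $2\sigma+1$ in the non-RAC version where legs may be drawn with a gentler shape. The RAC requirement is enforced by making every leg horizontal in a neighborhood of every crossing point: if all legs run horizontally at their ``cruising height'' then two crossing legs in the same page cross at right angles — wait, two horizontal segments do not cross, so instead I would make legs in the top page cross $\ell$-parallel segments vertically: route top-page legs as vertical-then-horizontal-then-vertical and bottom-page legs symmetrically, so that at every crossing one segment is horizontal and the other vertical. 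This is the standard RAC ``Manhattan'' trick and costs the stated extra bends.

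The second stage is the morphing of this canonical layout to the prescribed point set $S$. I would sort the points of $S$ by $x$-coordinate (breaking ties by $y$), costing $O(n\log n)$, and relabel so that $s_1, \dots, s_n$ respects this order; then I move the $i$-th vertex of $G$ (in spine order) to $s_i$. To realize this continuously I would use the classical technique (as in Pach--Wenger / Kaufmann--Wiese style arguments, and as cited) of routing each edge of the canonical drawing inside a thin ``tube'' that follows a monotone staircase between consecutive points of $S$: each edge picks up at most one additional bend per vertex it ``passes under'', but a more careful argument keeps this to $O(1)$ extra bends per edge by charging bends to spine traversals rather than to passed vertices. Concretely, I would place a vertical ``rail'' through each point of $S$ and route every leg along horizontal corridors between consecutive rails and along the rails themselves; a leg then becomes a rectilinear path with a constant number of turns, and since legs in the same page nest exactly as before, crossings are preserved with their right angles. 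The time bound $O(m\sigma + n\log n)$ comes from the sort plus touching each of the $O(m\sigma)$ legs a constant number of times.

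The main obstacle I expect is the bookkeeping of the bend count: showing that the combination of (i) the page-to-page alternation at spine traversals, (ii) the RAC ``horizontal meets vertical'' routing at crossings, and (iii) the staircase deformation onto $S$ can all be done simultaneously without the three mechanisms' bends adding up super-constantly per leg. The key idea to control this is to do everything in a single rectilinear coordinate system: choose the canonical layout to already be rectilinear and RAC, choose the rails through $S$ to be vertical, and observe that moving a rectilinear leg from ``canonical'' rails to ``$S$'' rails is an $x$-coordinate relabeling that does not create bends at all, only the final reattachment to $s_i$ (which is not on $\ell$) costs one bend per endpoint; summing, a leg gets its $\le 3$ canonical bends plus $\le 2$ reattachment bends, and merging at spine traversals yields the claimed $3(\sigma+1)$ (resp.\ $2\sigma+1$) total. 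The RAC variant pays the extra factor because every leg must turn vertical near each crossing and near each rail it threads, whereas the non-RAC variant may cross diagonally and thus needs fewer turns; verifying the exact constants $3(\sigma+1)$ and $2\sigma+1$ is the routine-but-delicate part, and I would handle the collinear-$S$ improvement mentioned elsewhere in the paper by noting that when $S \subset \ell$ no reattachment bends are needed.
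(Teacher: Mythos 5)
Your overall architecture is the same as the paper's: sort $S$ by $x$-coordinate, identify the spine order of the linear embedding with this left-to-right order, and realize each leg as a short polyline whose crossings are forced to occur between segments of two fixed opposite slopes. The paper implements this by enclosing $S$ in a horizontal band bounded by lines $L_{top}$ and $L_{bottom}$, forming $S'$ from $S$ plus one dummy point per spine traversal, giving each point of $S'$ a private vertical ``safe strip'', connecting each point to $L_{top}$/$L_{bottom}$ by stubs inside its own strip, and drawing every leg entirely \emph{outside} the band as two segments of slopes $+1$ and $-1$ joining stub endpoints: this gives exactly $3$ bends per leg, hence $3(\sigma+1)$ per edge, and every crossing occurs outside the band between a $+1$ and a $-1$ segment, hence at a right angle. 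Your proposal gestures at these ingredients but leaves the two delicate points unproved. First, the bend count: your own tally is ``$\le 3$ canonical bends plus $\le 2$ reattachment bends'' per leg, i.e.\ up to $5(\sigma+1)$ per edge, and the claim that ``merging at spine traversals'' recovers $3(\sigma+1)$ is asserted rather than shown; the paper avoids the issue by making the stub \emph{be} one of the three segments of the leg rather than an extra appendage, and by noting that attachment to an actual point of $S$ concerns only the extremal legs of an edge. Second, you never argue that the reattachment segments to the (non-collinear) points of $S$ do not cross the bodies of other legs; that is precisely what the pairwise-disjoint safe strips, together with the discipline of routing all leg bodies strictly outside the band, are for.

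The more substantive gap is the second bullet. The $2\sigma+1$ bound cannot be obtained by ``a gentler shape'' of the same construction: it requires one bend per leg, so each leg must run directly between the two points of $S'$ it connects, with no stubs. The paper's device is to thread a monotone polygonal path $\pi$ through the points of $S'$ in $x$-order, let $\rho$ be the maximum absolute slope along $\pi$, choose $\tau>\rho$, and draw each top-page (bottom-page) leg as a two-segment tent of slopes $+\tau$ and $-\tau$ above (below) $\pi$; the condition $\tau>\rho$ is exactly what guarantees that two same-page legs cross if and only if their endpoints interleave along $\pi$, and a final small slope perturbation removes overlaps among edges sharing an endpoint while preserving their circular order. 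This yields $\sigma+1$ bends from the legs plus $\sigma$ bends at the spine traversals, i.e.\ $2\sigma+1$. Nothing in your proposal supplies this mechanism or the crossing-preservation argument it rests on, so as written the second bullet is unsupported.
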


\begin{sketch}
    We assume, w.l.o.g., that no two points of $S$ have the same $x$-coordinate (this is in fact always the case by suitably rotating the plane).  Let $L_{top}$ be 
   a horizontal line above the point of $S$ with the largest $y$-coordinate and $L_{bottom}$ be
   a horizontal line below the point of $S$ with the smallest $y$-coordinate.
   Let $\mathcal{S}$ be the strip of plane between $L_{top}$ and $L_{bottom}$. Project the points of $S$ onto the $x$-axis and order them by increasing $x$-coordinate. Let $\Gamma$ be a topological linear layout of $G$ whose spine is the $x$-axis and whose vertices are the projection points of $S$ onto the spine. For each spine crossing of $\Gamma$ add a dummy point to $S$ that has the same $x$-coordinate of the spine crossing and is in the interior of $\mathcal{S}$. Let $S'$ be the set of points that includes all points of $S$ and all dummy points corresponding to the spine crossings of the edges of $\Gamma$.
   In this sketch we show how to construct a RAC point-set embedding, see the appendix for the complete proof.

 Since no two points of $S'$ have the same $x$-coordinate, we can define around each point $s \in S'$ a vertical strip $\tau_s$ whose interior contains $s$ and such that no other element of $S'$ is in the interior or on the boundary of $\tau_s$; we call $\tau_s$ the \emph{safe strip of $s$}.

    Let $u$ be a vertex of $\Gamma$ and let $s\in S'$ be the point of $S'$ having the same $x$-coordinate as $u$. Let $deg_{top}(u)$ be the number of edges of $\Gamma$ incident to $u$ in the top page and let $deg_{bottom}(u)$ be the number of edges incident to $v$ in the bottom page. We place $deg_{bottom}(u)$ dummy points in the interior of $\tau_s \cap L_{bottom}$ and $deg_{top}(u)$ dummy points in the interior of $\tau_s \cap L_{top}$. We then connect $s$ to each such dummy points. Each  segment connecting $s$ to a dummy point along $\tau_s \cap L_{bottom}$ is a \emph{bottom stub} of $s$. Each  segment connecting $s$ to a dummy point along $\tau_s \cap L_{top}$ is a \emph{top stub} of $s$.   
    Let $a$ be a spine crossing along an edge of $\Gamma$ and let  $s\in S'$ be the point of $S'$ having the same $x$-coordinate as $a$. The \emph {top stub}  of $s$ is the vertical segment  from $s$ to the 
    projection of $s$ onto  $L_{top}$. The \emph {bottom stub}  of $s$ is the vertical segment from 
    $s$ 
    to the projection of $s$ onto  $L_{bottom}$. The projection point of $s$ to $L_{top}$ ($L_{bottom})$ is referred to as the \emph{ endpoint of the top (bottom) stub } of $s$.

     We are now ready to compute a RAC point-set embedding of $G$.
     Every vertex or spine crossing of $\Gamma$ is mapped to the point of $S'$ with the same $x$-coordinate. For every vertex $u \in \Gamma$ we order its incident edges in the bottom page from left to right; similarly we order from left to right its incident edges in the top page. Let $s \in S'$ be the point onto which we map $u$: We order both its bottom stubs and its top stubs from left to right. 
     Let $(u,v)$ be the $i$-th edge of $\Gamma$ incident to $u$ (either in the bottom or top page of $\Gamma$) and the $j$-th edge incident to $v$ (either in the bottom or top page of $\Gamma$). 
    
     Let $(a,b)$ be a leg of $(u,v)$ in $\Gamma$. Let $s \in S'$ be the point to which $a$ is mapped and let $s' \in S'$ be the point to which $b$ is mapped. Without loss of generality, we assume that $a \prec b$ (the case where $a \succ b$ is handled similarly). If $a$ and $b$ are both spine crossings and  $(a,b)$ is in the top (bottom) page of $\Gamma$, we connect the endpoint $p$ of the top (bottom) stub  of $s$ to the endpoint $p'$ of the top (bottom) stub   of $s'$  with a polyline consisting of two segments: the segment incident to $p$ has slope $+1$ ($-1$), the segment incident to $p'$ has slope $-1$ ($+1$). Note that the leg $(a,b)$ is mapped to a polyline with three bends: One at $p$, another at $p'$ and a third one where the  two segments with opposite slopes meet.  The case where either $a=u$ and/or $b =v$ is treated similarly; the only difference being that the polyline representing the leg $(a,b)$ in the point-set embedding must be incident to either the $i$-th stub of the point representing $u$ or to the $j$-th stub of the point representing $v$. In this case $(a,b)$ is  also represented by a polyline having $3$ bends. 
    After every leg of $\Gamma$ has been drawn in the point-set embedding by means of the above procedure, all dummy points corresponding to the spine crossings of $\Gamma$ are removed from $S'$. By construction, every edge of $\Gamma$ that crosses the spine $k$ times (and consists of $k+1$ legs) is represented in the point-set embedding as a polyline having $3(k +1)$ bends. Therefore, if $\Gamma$ has at most $\sigma$ spine crossings per edge, the curve complexity of the point-set embedding is $3(\sigma +1)$.
   
    By construction, the stubs are in the interior of the safe strips and the safe strips do not overlap with each other. This implies that the point-set embedding has no edge crossing in the interior of $\mathcal{S}$. Also, note that the safe regions follow the left to right order of the points around which they are defined: This order is consistent with the left to right order of the vertices  and spine crossings along the spine of $\Gamma$.  It follows that a crossing occurs in the point-set embedding if and only if we have four points $s_0,s_1,s_2, s_3$ in $S'$ with $x(s_0)<x(s_1)<x(s_2)<x(s_3)$ which correspond to four vertices or spine crossings  $u_0,u_1,u_2,u_3$, respectively, such that in $\Gamma$ we have an arc of circumference with diameter $u_0u_2$, another arc of circumference with diameter $u_1u_3$, such that both legs are in the same page and and $u_0 \prec u_1 \prec u_2 \prec u_3$ in the linear order. This implies that the point-set embedding of $G$ onto $S$ has the same number $\chi$ of edge crossings as the  topological linear embedding of $G$.  
   
    Since all segments along the polylines outside $\mathcal{S}$ have either slope $+1$ or slope $-1$ and since any crossing occurs outside $\mathcal{S}$, we also have that the edge crossings form $\frac{\pi}{2}$ angles and hence the point-set embedding is a RAC drawing.

 Concerning the time complexity, the  procedure in this sketch first sorts the points of $S$ by increasing $x$-coordinate, then it maps the vertices of $G$ to the points of $S$ and then it processes an edge at a time. It spends  constant time to draw any leg. It follows that all edges of the point-set embedding are processed in $O(m\sigma)$-time, which leads to a computational complexity of $O(m\sigma + n \log n)$ to compute a (RAC) point-set embedding of $G$.
 
\end{sketch}


\Cref{le:tree-topological-linear-embedding}, \Cref{le:path-topological-linear}, and~\Cref{le:RAC-pse} imply that any tree (resp.\ path) admits a RAC point-set embedding with a prescribed number of crossings and curve complexity $9$ (resp.\ $3$). It can be proved that if the points are collinear, the curve complexity of the RAC point-set embedding can be reduced to $6$ (see \cref{fig:rac} for an example and the appendix for details). If we drop the constraint that the edge crossings form right angles, we obtain smaller curve complexities by means of the same Lemmas, as it is summarized in the following theorem.

\begin{restatable}[\restateref{th:rac}]{theorem}{thRac}\label{th:rac}
    Let $T$ be a tree with $n$ vertices, let $\chi \in [0..\vartheta(T)]$, and let $S$ be a set of $n$ distinct points in the plane. The following hold:
    \begin{itemize}
        \item We can  compute in $O(n^2)$-time   a point-set embedding of $T$ on $S$ with $\chi$ edge crossings and curve complexity $9$ if the drawing is RAC and $5$ otherwise. Also, if $S$ is a set of collinear points, the curve complexity of the RAC point-set embedding reduces to $6$.
        \item If $T$ is a path, we can  compute in $O(n \log n)$-time a point-set embedding of $T$ on $S$ with $\chi$ edge crossings and curve complexity $3$ if the drawing is RAC and $1$ otherwise.
    \end{itemize}
\end{restatable}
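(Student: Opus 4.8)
\begin{sketch}
The plan is to obtain every bound by feeding a topological linear embedding built earlier in the paper into the realization procedure of \Cref{le:RAC-pse}, and to handle the collinear case by a dedicated refinement of that procedure.

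For a general tree $T$, \Cref{le:tree-topological-linear-embedding} provides, in $O(n^2)$ time, a topological linear embedding $\mathcal{E}$ of $T$ with exactly $\chi$ edge crossings in which every edge has at most $\sigma=2$ spine traversals. The first item of \Cref{le:RAC-pse}, applied to $\mathcal{E}$ and $S$ with $\sigma=2$, yields a RAC point-set embedding of $T$ on $S$ with $\chi$ crossings and curve complexity at most $3(\sigma+1)=9$; its second item yields a (not necessarily RAC) point-set embedding with $\chi$ crossings and curve complexity at most $2\sigma+1=5$. Since $m=n-1$ and $\sigma=2$, the realization step costs $O(m\sigma+n\log n)=O(n\log n)$, so the $O(n^2)$ time spent computing $\mathcal{E}$ dominates and the total running time is $O(n^2)$. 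For a path $P_n$ I would instead invoke \Cref{le:path-topological-linear}, which for every $\chi\in[0..\vartheta(P_n)]$ outputs in $O(n)$ time a \emph{one-page} topological linear embedding with $\chi$ crossings, i.e.\ one with $\sigma=0$ spine traversals per edge; feeding it into \Cref{le:RAC-pse} with $\sigma=0$ gives a RAC point-set embedding of curve complexity at most $3(0+1)=3$ and, without the right-angle requirement, one of curve complexity at most $2\cdot 0+1=1$, in $O(m\cdot 0+n\log n)=O(n\log n)$ time overall.

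It remains to sharpen the RAC bound for trees from $9$ to $6$ when $S$ is a set of collinear points, and this is where the real work is. The plan is to rerun the construction behind \Cref{le:RAC-pse} with the spine chosen to be the line $\ell$ through $S$, so that the vertices of $T$ are placed directly onto the points of $S$ (no auxiliary projection axis is needed), every spine traversal is inserted on $\ell$ between the two consecutive points it must separate, and the embedding of \Cref{le:tree-topological-linear-embedding} is taken so that each subdivision vertex lies on $\ell$ immediately next to its incident real endpoint. Then the first and last legs of every edge have consecutive endpoints on $\ell$ and can each be drawn with a single bend that simultaneously fans out the edges around that endpoint, while the middle leg is drawn on its page as a right-angled ``tent'' -- two segments of slopes $+1$ and $-1$ meeting at an apex -- exactly the gadget used in \Cref{le:RAC-pse} to force crossings to be orthogonal. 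A bend count along each edge (at most one per short leg, one where consecutive legs join at a subdivision vertex, one at the tent apex, and a constant slack to accommodate the fan-out order) gives curve complexity at most $6$, while the crossing number is preserved because two legs still meet in the drawing precisely when they cross in $\mathcal{E}$ and every crossing still lies in the region where all segments have slope $\pm 1$, so the drawing is RAC. The delicate points -- which I defer to the appendix -- are to check that the fan-out around each vertex fits within its allotted bends and creates no spurious crossing, that clustering subdivision vertices next to their endpoints is compatible with the linear order output by \textsc{Prune-Tangle-Untangle}, and that the short legs cross nothing.
\end{sketch}
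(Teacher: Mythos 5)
The reduction of the first two items to \Cref{le:tree-topological-linear-embedding}, \Cref{le:path-topological-linear}, and \Cref{le:RAC-pse}, together with the arithmetic $3(\sigma+1)$ and $2\sigma+1$ for $\sigma=2$ (trees) and $\sigma=0$ (paths) and the running-time accounting, is exactly the paper's argument and is correct.

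The collinear case is where your proposal has a genuine gap. You assume that the embedding from \Cref{le:tree-topological-linear-embedding} can be ``taken so that each subdivision vertex lies on $\ell$ immediately next to its incident real endpoint,'' and your bend count of $6$ rests on the first and last legs then being short, single-bend arcs. But the position of the subdivision vertices in the spine order is precisely what encodes the crossings: in the tangled embedding the first leg of an edge in $E_j$ runs from its parent vertex to the \emph{last} unused point of $P_j$, passing over many other vertices and spine traversals, and it is these long first legs that realize most of the $\vartheta(T)$ crossings. Sliding a subdivision vertex along the spine to sit next to its real endpoint changes which pairs of legs interleave and hence changes $\chi$; for $\chi$ close to $\vartheta(T)$ no such ``clustered'' embedding with the same crossing number exists within the family produced by \textsc{Prune-Tangle-Untangle}. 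So the compatibility you defer to the appendix is not a technicality --- it is false in general, and the remaining bend count (``a constant slack to accommodate the fan-out order'') does not pin down the constant $6$.

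The paper instead leaves the linear order untouched: every leg, long or short, is drawn as the two equal sides of an isosceles right triangle over its spine segment (slopes $\pm 1$, one bend per leg, one bend at each spine traversal). The only defect of that drawing is that edges sharing an endpoint overlap near that vertex; this is repaired locally with stubs inside a vertical strip of width $\delta/4$ around each vertex, and the stub assigned to the edge coming from the parent is chosen \emph{parallel} to that edge so that the child-side leg needs no extra bend, while the parent-side leg acquires exactly one extra bend. This gives $2+1+1+2=6$ bends per edge with at most two spine traversals, and keeps all crossings between segments of opposite slope $\pm 1$, hence RAC. If you want to salvage your route, you would have to prove a version of \Cref{le:tree-topological-linear-embedding} that additionally controls where the subdivision vertices fall; as written, that claim is unsupported.
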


\begin{figure}[tb]
  \centering
  \includegraphics[width=\textwidth,page=8]{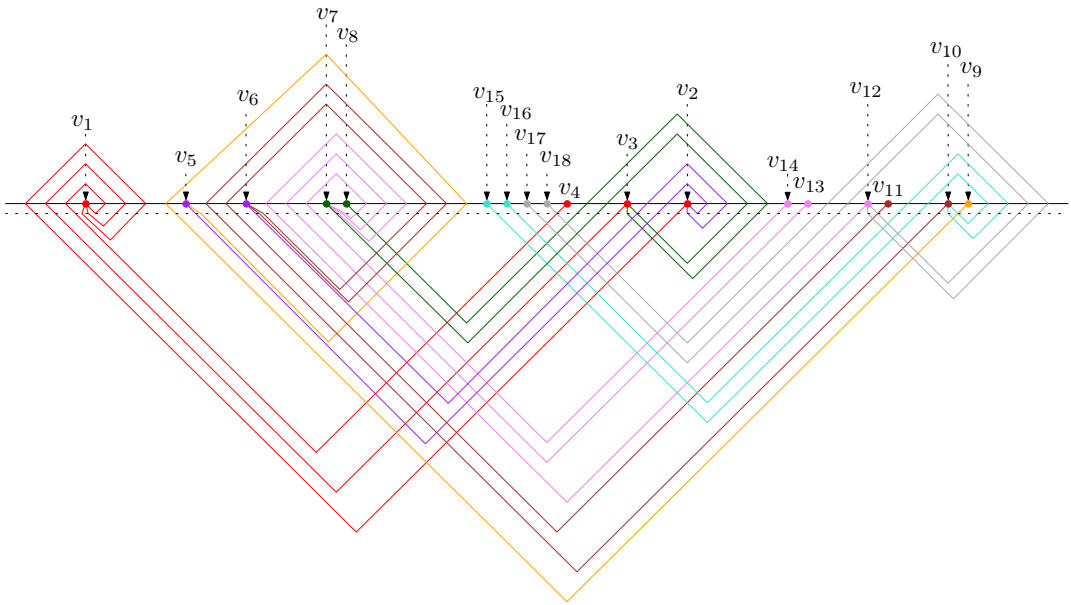}
  \caption{A RAC point-set embedding on collinear points of the tree of \cref{fig:example-tree}.}
  \label{fig:rac}
\end{figure}

\section{Final Remarks and Open Problems}\label{sec:conclusions}

Graph Drawing is characterized by a range of computational aesthetic metrics~\cite{DBLP:journals/access/BurchHWPWH21,DBLP:journals/vi/YoghourdjianADD18}. To investigate the impact of each metric on the readability it is useful to produce different drawings of the same graph where all metrics but one remain fixed  \cite{ddfec9eb686d402ca37a2b2a855881ff}. 
In this context, we compute drawings of trees with any number of crossings (which is probably the most studied aesthetic), while guaranteeing constant curve complexity, and constraining the vertices at given locations.

We conclude with some open problems.
\begin{inparaenum}[(1)]
    \item Is there an $o(n^2)$-time algorithm to compute a point-set embedding of a tree with $\chi$ crossings and constant curve  complexity? This question is interesting even for binary trees.
    \item Is it possible to compute RAC point-set embeddings of trees with any number of edge crossings and  curve complexity smaller than $9$?
    \item A seminal paper by Pach and Wenger studies the point-set embeddability without crossings when the mapping between  vertices and  points is part of the input~\cite{DBLP:journals/gc/PachW01}. It would be interesting to study the question of the present paper also in such a setting. 
    \item Finally, it would be interesting to extend our investigation to graph classes other than trees.
\end{inparaenum}

\bibliography{bibliography}

\begin{thebibliography}{10}

\bibitem{DBLP:journals/tcs/BadentGL08}
Melanie Badent, Emilio~Di Giacomo, and Giuseppe Liotta.
\newblock Drawing colored graphs on colored points.
\newblock {\em Theor. Comput. Sci.}, 408(2-3):129--142, 2008.
\newblock URL: \url{https://doi.org/10.1016/j.tcs.2008.08.004}, \href
  {https://doi.org/10.1016/J.TCS.2008.08.004}
  {\path{doi:10.1016/J.TCS.2008.08.004}}.

\bibitem{biedl2012point}
Therese Biedl and Martin Vatshelle.
\newblock The point-set embeddability problem for plane graphs.
\newblock In {\em Proceedings of the twenty-eighth annual symposium on
  Computational geometry}, pages 41--50, 2012.

\bibitem{bose2002-outerplanar}
Prosenjit Bose.
\newblock On embedding an outer-planar graph in a point set.
\newblock {\em Computational Geometry}, 23(3):303--312, 2002.

\bibitem{DBLP:journals/comgeo/Bose02}
Prosenjit Bose.
\newblock On embedding an outer-planar graph in a point set.
\newblock {\em Comput. Geom.}, 23(3):303--312, 2002.
\newblock \href {https://doi.org/10.1016/S0925-7721(01)00069-4}
  {\path{doi:10.1016/S0925-7721(01)00069-4}}.

\bibitem{bose1997-trees}
Prosenjit Bose, Michael McAllister, and Jack Snoeyink.
\newblock Optimal algorithms to embed trees in a point set.
\newblock {\em J. Graph Algorithms Appl.}, 1(2):1--15, 1997.

\bibitem{DBLP:journals/access/BurchHWPWH21}
Michael Burch, Weidong Huang, Mathew Wakefield, Helen~C. Purchase, Daniel
  Weiskopf, and Jie Hua.
\newblock The state of the art in empirical user evaluation of graph
  visualizations.
\newblock {\em {IEEE} Access}, 9:4173--4198, 2021.
\newblock \href {https://doi.org/10.1109/ACCESS.2020.3047616}
  {\path{doi:10.1109/ACCESS.2020.3047616}}.

\bibitem{cabello2006planar}
Sergio Cabello.
\newblock Planar embeddability of the vertices of a graph using a fixed point
  set is np-hard.
\newblock {\em Journal of Graph Algorithms and Applications}, 10(2):353--363,
  2006.

\bibitem{ChimaniFKUVW_2024}
Markus Chimani, Stefan Felsner, Stephen Kobourov, Torsten Ueckerdt, Pavel
  Valtr, and Alexander Wolff.
\newblock On the maximum crossing number.
\newblock {\em Journal of Graph Algorithms and Applications}, 22(1):67–87,
  Mar. 2018.
\newblock URL: \url{https://jgaa.info/index.php/jgaa/article/view/paper458},
  \href {https://doi.org/10.7155/jgaa.00458} {\path{doi:10.7155/jgaa.00458}}.

\bibitem{DBLP:books/ph/BattistaETT99}
Giuseppe {Di Battista}, Peter Eades, Roberto Tamassia, and Ioannis~G. Tollis.
\newblock {\em Graph Drawing: Algorithms for the Visualization of Graphs}.
\newblock Prentice-Hall, 1999.

\bibitem{Didimo2020}
Walter Didimo.
\newblock Right angle crossing drawings of graphs.
\newblock In Seok{-}Hee Hong and Takeshi Tokuyama, editors, {\em Beyond Planar
  Graphs, Communications of {NII} Shonan Meetings}, pages 149--169. Springer,
  2020.
\newblock \href {https://doi.org/10.1007/978-981-15-6533-5\_9}
  {\path{doi:10.1007/978-981-15-6533-5\_9}}.

\bibitem{DidimoLM2019}
Walter Didimo, Giuseppe Liotta, and Fabrizio Montecchiani.
\newblock A survey on graph drawing beyond planarity.
\newblock {\em ACM Comput. Surv.}, 52(1), feb 2019.
\newblock \href {https://doi.org/10.1145/3301281} {\path{doi:10.1145/3301281}}.

\bibitem{erdos1946}
P.~Erd{\"o}s.
\newblock On sets of distances of n points.
\newblock {\em The American Mathematical Monthly}, 53(5):248--250, 1946.
\newblock \href {https://doi.org/10.2307/2305092} {\path{doi:10.2307/2305092}}.

\bibitem{FinkHMSW12}
Martin Fink, Jan{-}Henrik Haunert, Tamara Mchedlidze, Joachim Spoerhase, and
  Alexander Wolff.
\newblock Drawing graphs with vertices at specified positions and crossings at
  large angles.
\newblock In Md.~Saidur Rahman and Shin{-}Ichi Nakano, editors, {\em {WALCOM:}
  Algorithms and Computation - 6th International Workshop, {WALCOM} 2012,
  Dhaka, Bangladesh, February 15-17, 2012. Proceedings}, volume 7157 of {\em
  Lecture Notes in Computer Science}, pages 186--197. Springer, 2012.
\newblock \href {https://doi.org/10.1007/978-3-642-28076-4\_19}
  {\path{doi:10.1007/978-3-642-28076-4\_19}}.

\bibitem{FulekPach2019}
Radoslav Fulek and János Pach.
\newblock Thrackles: An improved upper bound.
\newblock {\em Discrete Applied Mathematics}, 259:226--231, 2019.
\newblock URL:
  \url{https://www.sciencedirect.com/science/article/pii/S0166218X19300010},
  \href {https://doi.org/10.1016/j.dam.2018.12.025}
  {\path{doi:10.1016/j.dam.2018.12.025}}.

\bibitem{furry-mrcc-1977}
W.~H. Furry and D.~J. Kleitman.
\newblock Maximal rectilinear crossing of cycles.
\newblock {\em Studies in Applied Mathematics}, 56(2):159--167, 1977.
\newblock URL:
  \url{https://onlinelibrary.wiley.com/doi/abs/10.1002/sapm1977562159}, \href
  {https://arxiv.org/abs/https://onlinelibrary.wiley.com/doi/pdf/10.1002/sapm1977562159}
  {\path{arXiv:https://onlinelibrary.wiley.com/doi/pdf/10.1002/sapm1977562159}},
  \href {https://doi.org/10.1002/sapm1977562159}
  {\path{doi:10.1002/sapm1977562159}}.

\bibitem{DBLP:journals/jgaa/GiacomoDLMTW08}
Emilio~Di Giacomo, Walter Didimo, Giuseppe Liotta, Henk Meijer, Francesco
  Trotta, and Stephen~K. Wismath.
\newblock k-colored point-set embeddability of outerplanar graphs.
\newblock {\em J. Graph Algorithms Appl.}, 12(1):29--49, 2008.
\newblock URL: \url{https://doi.org/10.7155/jgaa.00158}, \href
  {https://doi.org/10.7155/JGAA.00158} {\path{doi:10.7155/JGAA.00158}}.

\bibitem{DBLP:journals/tcs/GiacomoGLN20}
Emilio~Di Giacomo, Leszek Gasieniec, Giuseppe Liotta, and Alfredo Navarra.
\newblock On the curve complexity of 3-colored point-set embeddings.
\newblock {\em Theor. Comput. Sci.}, 846:114--140, 2020.
\newblock URL: \url{https://doi.org/10.1016/j.tcs.2020.09.027}, \href
  {https://doi.org/10.1016/J.TCS.2020.09.027}
  {\path{doi:10.1016/J.TCS.2020.09.027}}.

\bibitem{DBLP:journals/algorithmica/GiacomoLT10}
Emilio~Di Giacomo, Giuseppe Liotta, and Francesco Trotta.
\newblock Drawing colored graphs with constrained vertex positions and few
  bends per edge.
\newblock {\em Algorithmica}, 57(4):796--818, 2010.
\newblock URL: \url{https://doi.org/10.1007/s00453-008-9255-2}, \href
  {https://doi.org/10.1007/S00453-008-9255-2}
  {\path{doi:10.1007/S00453-008-9255-2}}.

\bibitem{gritzmann1991e3341}
Peter Gritzmann, Bojan Mohar, J{\'a}nos Pach, and Richard Pollack.
\newblock Embedding a planar triangulation with vertices at specified points.
\newblock {\em The American Mathematical Monthly}, 98(2):165, 1991.

\bibitem{Harborth1988}
Heiko Harborth.
\newblock Drawings of the cycle graph.
\newblock {\em Congressus Numerantium}, 66:15--22, Jan. 1988.

\bibitem{HuangEH14}
Weidong Huang, Peter Eades, and Seok{-}Hee Hong.
\newblock Larger crossing angles make graphs easier to read.
\newblock {\em J. Vis. Lang. Comput.}, 25(4):452--465, 2014.
\newblock URL: \url{https://doi.org/10.1016/j.jvlc.2014.03.001}, \href
  {https://doi.org/10.1016/J.JVLC.2014.03.001}
  {\path{doi:10.1016/J.JVLC.2014.03.001}}.

\bibitem{DBLP:conf/dagstuhl/1999dg}
Michael Kaufmann and Dorothea Wagner, editors.
\newblock {\em Drawing Graphs, Methods and Models (the book grow out of a
  Dagstuhl Seminar, April 1999)}, volume 2025 of {\em Lecture Notes in Computer
  Science}. Springer, 2001.
\newblock \href {https://doi.org/10.1007/3-540-44969-8}
  {\path{doi:10.1007/3-540-44969-8}}.

\bibitem{KaufmannW02}
Michael Kaufmann and Roland Wiese.
\newblock Embedding vertices at points: Few bends suffice for planar graphs.
\newblock {\em J. Graph Algorithms Appl.}, 6(1):115--129, 2002.
\newblock URL: \url{https://doi.org/10.7155/jgaa.00046}, \href
  {https://doi.org/10.7155/JGAA.00046} {\path{doi:10.7155/JGAA.00046}}.

\bibitem{DBLP:journals/jgaa/KaufmannW02}
Michael Kaufmann and Roland Wiese.
\newblock Embedding vertices at points: Few bends suffice for planar graphs.
\newblock {\em J. Graph Algorithms Appl.}, 6(1):115--129, 2002.
\newblock URL: \url{https://doi.org/10.7155/jgaa.00046}, \href
  {https://doi.org/10.7155/JGAA.00046} {\path{doi:10.7155/JGAA.00046}}.

\bibitem{LovaszPS97}
L.~Lovász, J.~Pach, and M.~Szegedy.
\newblock On conway's thrackle conjecture.
\newblock {\em Discrete and Computational Geometry}, 18(4):369 – 376, 1997.
\newblock \href {https://doi.org/10.1007/PL00009322}
  {\path{doi:10.1007/PL00009322}}.

\bibitem{ddfec9eb686d402ca37a2b2a855881ff}
{Gavin J.} Mooney, {Helen C.} Purchase, Michael Wybrow, and {Stephen G.}
  Kobourov.
\newblock The multi-dimensional landscape of graph drawing metrics.
\newblock In Jorji Nonaka, editor, {\em Proceedings - 2024 IEEE 17th Pacific
  Visualization Conference, PacificVis 2024}, pages 122--131, 2024.
\newblock \href {https://doi.org/10.1109/PacificVis60374.2024.00022}
  {\path{doi:10.1109/PacificVis60374.2024.00022}}.

\bibitem{nishat2012point}
Rahnuma~Islam Nishat, Debajyoti Mondal, and Md~Saidur Rahman.
\newblock Point-set embeddings of plane 3-trees.
\newblock {\em Computational Geometry}, 45(3):88--98, 2012.

\bibitem{DBLP:journals/gc/PachW01}
J{\'{a}}nos Pach and Rephael Wenger.
\newblock Embedding planar graphs at fixed vertex locations.
\newblock {\em Graphs Comb.}, 17(4):717--728, 2001.
\newblock \href {https://doi.org/10.1007/PL00007258}
  {\path{doi:10.1007/PL00007258}}.

\bibitem{PachS2011}
János Pach and Ethan Sterling.
\newblock Conway's conjecture for monotone thrackles.
\newblock {\em The American Mathematical Monthly}, 118(6):544--548, 2011.
\newblock \href {https://doi.org/10.4169/amer.math.monthly.118.06.544}
  {\path{doi:10.4169/amer.math.monthly.118.06.544}}.

\bibitem{piazza1988}
B.~L. Piazza, R.~D. Ringeisen, and S.~K. Stueckle.
\newblock Properties of non-minimum crossings for some classes of graphs.
\newblock In Yousef Alavi, G.~Chartrand, O.~R. Oellermann, and A.~J. Schwenk,
  editors, {\em Graph Theory, Combinatorics, and Applications: Proceedings of
  the Sixth Quadrennial International Conference on the Theory and Applications
  of Graphs, Kalamazoo, Western Michigan University, 1988, Volume 2}, pages
  975--989. Wiley-Interscience, New York, March 1991.

\bibitem{schaefer2013}
Marcus Schaefer.
\newblock The graph crossing number and its variants: A survey.
\newblock {\em The Electronic Journal of Combinatorics}, 2013.
\newblock Version May 17, 2024.
\newblock \href {https://doi.org/10.37236/2713} {\path{doi:10.37236/2713}}.

\bibitem{DBLP:reference/crc/2013gd}
Roberto Tamassia, editor.
\newblock {\em Handbook on Graph Drawing and Visualization}.
\newblock Chapman and Hall/CRC, 2013.
\newblock URL:
  \url{https://www.crcpress.com/Handbook-of-Graph-Drawing-and-Visualization/Tamassia/9781584884125}.

\bibitem{Woodall69}
D.R. Woodall.
\newblock Thrackles and deadlock.
\newblock In D.~J.~A. Welsh, editor, {\em Combinatorial Mathematics and its
  Applications, Proc. 1969 Oxford Combinatorial Conference}, pages 335--347.
  Academic Press, 2011.

\bibitem{Xu2021}
Yian Xu.
\newblock A new upper bound for conway’s thrackles.
\newblock {\em Applied Mathematics and Computation}, 389:125573, 2021.
\newblock URL:
  \url{https://www.sciencedirect.com/science/article/pii/S0096300320305294},
  \href {https://doi.org/10.1016/j.amc.2020.125573}
  {\path{doi:10.1016/j.amc.2020.125573}}.

\bibitem{DBLP:journals/vi/YoghourdjianADD18}
Vahan Yoghourdjian, Daniel Archambault, Stephan Diehl, Tim Dwyer, Karsten
  Klein, Helen~C. Purchase, and Hsiang{-}Yun Wu.
\newblock Exploring the limits of complexity: {A} survey of empirical studies
  on graph visualisation.
\newblock {\em Vis. Informatics}, 2(4):264--282, 2018.
\newblock URL: \url{https://doi.org/10.1016/j.visinf.2018.12.006}, \href
  {https://doi.org/10.1016/J.VISINF.2018.12.006}
  {\path{doi:10.1016/J.VISINF.2018.12.006}}.

\end{thebibliography}

\newpage
\appendix

\section{Full Proofs from \cref{sec:trees} (\nameref{sec:trees})}

First observe that a topological circular embedding can easily be used to produce a \emph{topological circular layout}, where the vertices and the spine traversals are distributed along a circle $\mathcal{C}$. We can assume that the top page corresponds to the unbounded (outer) region delimited by $\mathcal{C}$ and the bottom page corresponds to the finite (inner) region delimited by $\mathcal{C}$. Each leg $(u,v)$ can be represented as a straight segment if it is assigned to the inner region, while if it is assigned to the outer region it can be represented as an arc in the outer region on a circle $\mathcal{C'}$ that orthogonally intersect $\mathcal{C}$ at $u$ and $v$ (we may assume, up to a perturbation, that $u$ and $v$ do not correspond to a diameter of $\mathcal{C}$ and, hence, $\mathcal{C'}$ has a finite radius).

\leEntangle*
\label{le:Entangle*}
\begin{proof}
Let $\mathcal{E}_\circ$ be the topological circular embedding produced by Algorithm \textsc{Tangle}. We now determine what is the number of crossings of $\mathcal{E}_\circ$. By construction, all the edges of $\mathcal{E}_\circ$ are composed by three legs, the second of which is drawn in the outer region.

First, we show that the second legs of the edges in $E$ do not cross in $\mathcal{E}_\circ$. Observe that, since all and only the second legs of the edges in $E$ are assigned to the outer region, the second leg of an edge could only possibly cross with another second leg of some other edge.
Consider the second leg $\secondleg{e}$ of one edge $e \in E_j$, with $j \in [0..h-1]$, and the second leg $\secondleg{e'}$ of an edge $e' \in E_{j'}$, with $j' \in [0..h-1]$. If $j' \neq j$, as both endpoints of $\secondleg{e}$ belong to $P_j$ and both endpoints of $\secondleg{e'}$ belong to $P_{j'}$, the endpoints of $\secondleg{e}$ and $\secondleg{e'}$ do not interleave on the spine and, by definition, the two legs do not intersect. Otherwise, if $j' = j$, as the edges of $E_j$ greedily use the last and the first unused point of $P_j$, $\secondleg{e}$ and $\secondleg{e'}$ nest each other and do not cross in $\mathcal{E}_\circ$.  

Second, we show that the first legs of the edges in $E$ do not cross among themselves in $\mathcal{E}_\circ$. 
Consider the first leg $\firstleg{e}$ of one edge $e \in E_j$, with $j \in [0..h-1]$ and the first leg $\firstleg{e'}$ of an edge $e' \in E_{j'}$, with $j' \in [0..h-1]$. If $j' \neq j$, as both endpoints of $\firstleg{e}$ belong to $P_j$ and both endpoints of $\firstleg{e'}$ belong to $P_{j'}$, the endpoints of $\firstleg{e}$ and $\firstleg{e'}$ do not interleave along the spine and, hence, the two legs do not cross. Otherwise, if $j' = j$, observe that the starting points of $\firstleg{e}$ and $\firstleg{e'}$ are the points of $P_j$ used by the vertices in $V_j$ and, hence, their index is non-decreasing, while the ending points of $\firstleg{e}$ and $\firstleg{e'}$ are the points of $P_j$ used greedily in counter-clockwise order and, hence, their index is decreasing. Therefore, $\firstleg{e}$ and $\firstleg{e'}$ do not cross in $\mathcal{E}_\circ$.

It follows that the only possible crossings in $\mathcal{E}_\circ$ are between a first leg and a third leg of two edges in $E$ or between two third legs of two edges in $E$. 
Let's start from the first kind: let $\firstleg{e}$ be the first leg of any edge $e \in E_j$ and let $v_{j,k} \in V_j$ be the vertex $\firstleg{e}$ is incident to. Consider the third leg $\thirdleg{e'}$ of any edge $e' \in E_{j'}$. Observe that, if $j' \neq j$ and $j' \neq j-1$, then the endpoints of $\thirdleg{e'}$ are not in $P_j$ and, hence, $\firstleg{e}$ and $\thirdleg{e'}$ do not cross. 
Otherwise, if $j' = j$, $\firstleg{e}$ crosses the third leg $\thirdleg{e'}$ of any edge $e' \in E_j$ that leaves from a vertex that is placed on a point of $P_j$ that clockwise follows $v_{j,k}$. For example, in \cref{fig:tree-edge-routing} the first leg of edge $(v_3,v_7)$ (drawn thick) crosses the third leg of edges $(v_2,v_5)$ and $(v_2,v_6)$. Analogously, if $j' = j-1$, $\firstleg{e}$ crosses the third leg 
$\thirdleg{e'}$ 
of any edge 
$e' \in E_{j-1}$ that is incident to a vertex that is placed on a point of $P_j$ that clockwise follows $v_{j,k}$. For example, in \cref{fig:tree-edge-routing} the first leg of edges $(v_3,v_7)$ and $(v_3, v_8)$ cross the third leg of edge $(v_1,v_2)$.

Finally, consider the third leg 
$\thirdleg{e}$ of one edge $e \in E_j$, with $j \in [0..h-1]$ and the third leg $\thirdleg{e'}$ of an edge $e' \in E_{j'}$, with $j' \in [0..h-1]$. If $j' = j$, $\thirdleg{e}$ and $\thirdleg{e'}$ do not cross. Indeed, assume $j$ is even (resp. $j$ is odd). Since the points in $P_j$ are numbered in the order (resp. in the reversed order) as they appear at depth $j$ in~$T$ and the points in $P_{j+1}$ are numbered in the reversed order (resp. in the order) as they appear at depth $j+1$ in $T$ we have that the two legs nest each other. If  $j' = j-1$, $\thirdleg{e}$ crosses the third leg 
$\thirdleg{e'}$ of any edge 
$e' \in E_{j-1}$ that is incident to a vertex that is placed on a point of $P_j$ that clockwise precedes $v_{j,k}$. For example, in \cref{fig:tree-whole} the third leg of edge $(v_7,v_{12})$ (drawn thick) crosses the third leg of edges $(v_2,v_5)$ and $(v_2,v_6)$. If  $j' = j+1$, $\thirdleg{e}$ crosses the third leg 
$\thirdleg{e'}$ of any edge 
$e' \in E_{j+1}$ that is incident to a vertex that is placed on a point of $P_{j+1}$ that clockwise follows the vertex $\thirdleg{e}$. 
For example, in \cref{fig:tree-whole} the third leg of edge $(v_7,v_{12})$ (drawn thick) crosses the third legs of edges $(v_{10},v_{15})$ and $(v_{10},v_{16})$. Otherwise, if $j' \notin \{j-1,j,j+1\}$, $\thirdleg{e}$ and $\thirdleg{e'}$ always cross, due to the placement of vertices in $P_i$, with $i=0,\dots, h$, along $\mathcal{C}$. 

Consider two edges $(u,v)$ and $(w,z)$ of the same $E_j$, $j \in [0..h-1]$, where $u$ and $w$ belong to $P_j$. Edges $(u,v)$ and $(w,z)$ cross whenever $u$ uses a point of $P_j$ that precedes the point of $P_j$ used by $w$ ($\firstleg{(u,v)}$ crosses $\thirdleg{(w,z)}$). Hence, all the edges of $E_j$ cross among themselves with the exception of those that are incident to the same vertex of $V_j$. 
Consider two edges $(u,v) \in E_j$ and $(w,z) \in E_{j-1}$, with $j \in [1..h-1]$, where $u$ and $z$ belong to $V_j$. Edges $(u,v)$ and $(w,z)$ cross when $u$ uses a point of $P_j$ that precedes the point of $P_j$ used by $z$ ($\firstleg{(u,v)}$ crosses $\thirdleg{(w,z)}$). 
Also, $(u,v)$ and $(w,z)$ cross when $u$ uses a point of $P_j$ that follows the point of $P_j$ used by $z$ ($\thirdleg{(u,v)}$ crosses $\thirdleg{(w,z)}$). Hence, $(u,v) \in E_j$ and $(w,z) \in E_{j-1}$ cross whenever they are not adjacent (i.e., whenever $u \neq z$).
Finally, two edges $(u,v) \in E_j$ and $(w,z) \in E_{j'}$ with $j' \notin \{j-1,j,j+1\}$ always cross once ($\thirdleg{(u,v)}$ crosses $\thirdleg{(w,z)}$). In conclusion two edges of $T$ cross if and only if they are not adjacent. This proves that the number of crossings is $\vartheta(T)$.

It is immediate to check that producing the topological circular embedding $\mathcal{E}_\circ$ takes linear time.
\end{proof}

\leThetaPrime*
\label{le:theta-prime*}
\begin{proof}
Recall that $\vartheta'(T) = |E|^2/2 -
(\sum_{i=0}^{h-1}|E_i|^2)/2 - \sum_{i=0}^{h-2}(|E_i|  |E_{i+1}|)$. In order to simplify the computations, assume that the edge sets $E_{-1}$ and $E_h$ are well defined and  empty, that is, $|E_{-1}|=|E_h|=0$.

When all $V_i,~0\leq i \leq h,$ are in full-rainbow configuration, each edge of  $E_i,~0\leq i \leq h-1$, crosses all edges of the tree but those of $E_{i-1}$, $E_i$ and $E_{i+1}$. Thus, the edges of $E_i$ are involved in a total of
$|E_i|\bigl(|E|- |E_{i-1}| - |E_i| -|E_{i+1}|\bigr)$ crossings.
Considering that each crossing is counted twice, we see that the total number of crossings is:
$$
\frac{1}{2} \sum_{i=0}^{h-1} \biggl (|E_i| \bigl(|E|- |E_{i-1}| - |E_i| -|E_{i+1}| \bigr ) \biggr) =$$ 
$$\frac{1}{2} \biggl [ |E| \sum_{i=0}^{h-1} |E_i| - 
                        \sum_{i=0}^{h-1} \bigl( |E_i||E_{i-1}|\bigr) -
                        \sum_{i=0}^{h-1} |E_i|^2 -
                        \sum_{i=0}^{h-1} \bigl( |E_i||E_{i+1}|\bigr)
\biggr] = $$
$$\frac{1}{2} \biggl [ |E|^2 - 
                        \Bigl( |E_{0}| |E_{-1}| + \sum_{i=0}^{h-2} \bigl( |E_{i+1}||E_i|\bigr) \Bigr ) -
                        \sum_{i=0}^{h-1} |E_i|^2 -
                 \Bigl( \sum_{i=0}^{h-2} \bigl( |E_i||E_{i+1}|\bigr) +|E_{h-1}|  |E_h| \Bigr)
\biggr] = $$
$$
\frac{1}{2} |E|^2 - 
\frac{1}{2} \sum_{i=0}^{h-1} |E_i|^2 - 
\sum_{i=0}^{h-2} \bigl( |E_i||E_{i+1}|\bigr)
$$
The computed quantity is precisely $\vartheta'(T)$.
\end{proof}

\leDisentangle*
\label{le:tree-topological-linear-embedding-unefficient*}
\begin{proof}
First, consider the case when $\chi=\vartheta(T)$. By  \cref{le:Entangle} Algorithm \textsc{Tangle} directly computes the required topological linear embedding in $O(n)$ time. 
Second, suppose $\vartheta'(T) \leq \chi < \vartheta(T)$, where $\vartheta'(T) = |E|^2/2 -
(\sum_{i=0}^{h-1}|E_i|^2)/2 - \sum_{i=0}^{h-2}(|E_i|  |E_{i+1}|)$. We first apply Algorithm \textsc{Tangle} and then $\vartheta(T) - \chi$ iterations of Phase 1 of Algorithm \textsc{Untangle}. 
In particular, we process $T$ bottom up removing the crossings one-by-one. As described in \cref{se:untangle}, for $i=h-1$ down to zero, the crossing removal process first embeds each vertex of $V_i$ as a rainbow and then brings all the vertices of $V_i$ in a full-rainbow configuration. 
If $\chi = \vartheta'(T)$, then all the $\vartheta(T) - \vartheta'(T)$ iteration of Phase 1 are performed, all $V_i$, $0 \leq i \leq h$, are in full-rainbow configuration and, by \cref{le:theta-prime}, the total number of crossings is $\vartheta'(T) = \chi$. Otherwise, Phase 1 of Algorithm \textsc{Untangle} is stopped when the embedding has $\chi$ crossings.  

When an iteration of Phase~1 is performed, a specific level $1 \leq i \leq h-1$ is considered and the order of the vertices of $V_i$ and of spine traversals of the edges in $E_i$ along the circle is changed by moving a single vertex or spine traversal a constant number of times, so that the number of crossings among edges in $E_i$ is decreased by one. In some specific cases (see, for example, vertex $v$ of \cref{fig:disentangle-g,fig:disentangle-h,fig:disentangle-i}) the order of the vertices of $V_{i+1}$ and spine traversals of $E_{i+1}$ is also changed, while the invariant that each level $j > i$ is in full-rainbow configuration has to be maintained. In order to do so in linear time, we associate each vertex $v$ of $V_{i+1}$ with the sub-sequence $S^v_{i+1}$ of spine traversals of the edges of $E_{i+1}$ incident to $v$ (see \cref{fig:sequences}). When a vertex $v$ is moved to a new position in the circular order of $V_{i+1}$ the sub-sequence $S^v_{i+1}$ is moved accordingly. When the position of $S^v_{i+1}$ is changed, the order of the vertices of $V_{i+2}$ has to be changed also. 
This is done by associating to $S^v_{i+1}$ the sub-sequence $V^v_{i+2}$ that are descendants of $v$ and belong to $V_{i+2}$. The change of the embedding is propagated down to the tree till the leaves are reached. More in detail, each vertex $v$ of $V_i$ is associated with sub-sequences $S^v_{i+1}$, $V^v_{i+2}$, $S^v_{i+2}$, $V^v_{i+3}$, $S^v_{i+3}$, \dots, $V^v_h$, where, for $i+1 \leq l \leq h-1$, $S^v_l$ is the sequence of spine traversals of the edges in $E_{l}$ (observe that since $l$ is in full-rainbow configuration, each edge in $E_l$ has a single spine traversal) and, for $i+2 \leq l \leq h$, $V^v_l$ is the sequence of descendants of $v$ belonging to $V_l$. 
When $v$ is moved immediately before (resp. after)  $u$ in clockwise order on level $i+1$, $S^v_{i+1}$ is moved immediately  after (resp. before)  $S^u_{i+1}$ in clockwise order; $V^v_{i+2}$ is moved immediately before (resp. after)  $V^u_{i+2}$ in clockwise order;  $S^v_{i+1}$ is moved immediately after (resp. before) of $S^u_{i+1}$ in clockwise order; etc. (see \cref{fig:sequences} where vertex $v$ is moved to the immediately before  $u$ in clockwise order).

\begin{figure}[tb]
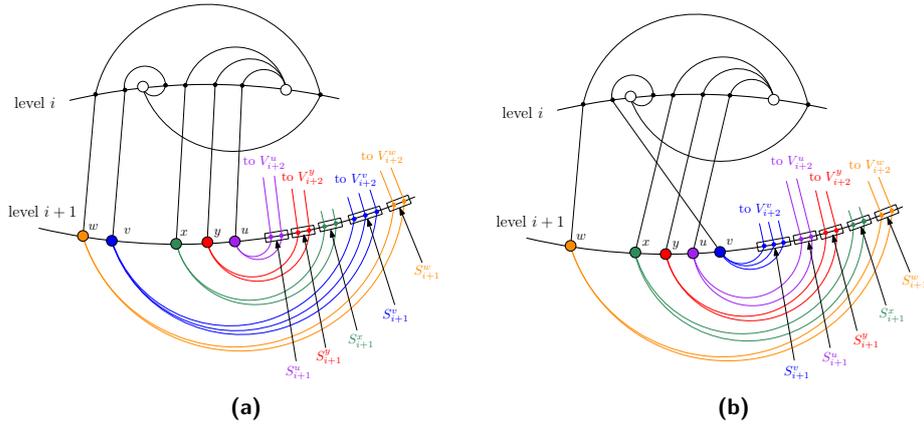

    \begin{subfigure}[b]{0.45\textwidth}
    \includegraphics[page=10,scale=0.5]{disentangle.pdf}
      \subcaption{}
      \label{fig:sequences-a}
    \end{subfigure}
    \begin{subfigure}[b]{0.45\textwidth}
    \includegraphics[page=11,scale=0.5]{disentangle.pdf}
      \subcaption{}
      \label{fig:sequences-b}
    \end{subfigure}
    \caption{Handling changes in the tree $T$ when vertex $v$ of (a) is moved to 
    immediately before $u$ in clockwise order in (b).  
    }
    \label{fig:sequences}
\end{figure}

\begin{figure}[tb]
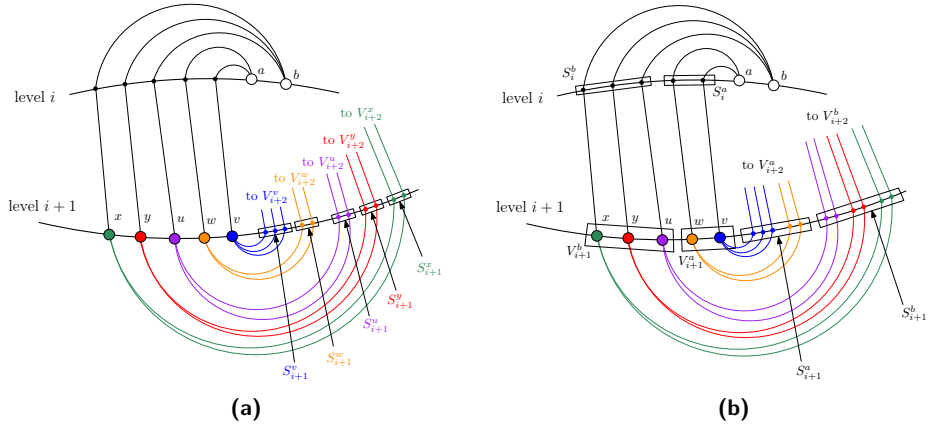

    \begin{subfigure}[b]{0.45\textwidth}
    \includegraphics[page=12,scale=0.5]{disentangle.pdf}
      \subcaption{}
      \label{fig:sequences-update-a}
    \end{subfigure}
    \begin{subfigure}[b]{0.45\textwidth}
    \includegraphics[page=13,scale=0.5]{disentangle.pdf}
      \subcaption{}
      \label{fig:sequences-update-b}
    \end{subfigure}
    \caption{Update of the sub-sequences when level $i$ has reached a full-rainbow configuration.}
    \label{fig:sequences-update}
\end{figure}

These changes can be performed in overall linear time and guarantee that all levels of $T$ below level $i$ are maintained in full-rainbow configuration. 
When level $i$ of $T$ is also in full-rainbow configuration (see \cref{fig:sequences-update-a}), before launching the subsequent iterations of Phase~1 of Algorithm \textsc{Untangle} on level $i-1$ of $T$, we can obtain, for each vertex $v \in V_{i}$, the above defined sub-sequences as follows.
\begin{itemize}
    \item $V^v_{i+1}$ is the sub-sequence of vertices of $V_{i+1}$ that are children of $v$ (see, for example, $V^a_{i+1}$ or $V^b_{i+1}$ of \cref{fig:sequences-update-b}).
    \item For $i+1 < l \leq h$, sub-sequence $V^v_l$ is the concatenation of sub-sequences $V^u_{l}$ for each $u$ that belongs to sub-sequence $V^v_{l-1}$ (hence, in \cref{fig:sequences-update-b}, $V^a_{i+2}$ is the concatenation of $V^v_{i+2}$ and $V^w_{i+2}$, as $v$ and $w$ are in $V^a_{i+1}$).
    \item $S^v_{i}$ is the current sub-sequence of spine traversals of the edges of $E_i$ incident to $v$ (see, for example, $S^a_{i}$ or $S^b_{i}$ of \cref{fig:sequences-update-b}).
    \item For $i+1 \leq l \leq h-1$, sub-sequence $S^v_l$ is obtained by concatenating the sequences $S^u_l$ for each $u$ that belongs to sub-sequence $V^v_l$ (see, for example, $S^a_{i+1}$ of \cref{fig:sequences-update-b} which is the concatenation of $S^v_{i+1}$ and $S^w_{i+1}$).
\end{itemize}
From the above discussion it is easy to see that the sub-sequences for all the vertices in $V_i$ can be constructed in linear time. Since the number of needed iterations of Phase~1 of Algorithm \textsc{Untangle} is $O(n^2)$, the total time is $O(n^3)$. 

Finally, suppose that $\chi < \vartheta'(T)$. We first apply Algorithm \textsc{Tangle} and then $\vartheta(T) - \vartheta'(T)$ iterations of Phase~1 of Algorithm \textsc{Untangle}. When all the levels of $T$ are in full-rainbow configuration (see, for example, \cref{fig:phase-2-a}) we perform $\chi - \vartheta'(T)$ iterations of Phase~2 of Algorithm \textsc{Untangle}, where each iteration can be performed in constant time. Again, an embedding with $\chi < \vartheta'(T)$ intersections can be obtained in $O(n^3)$ time.
\end{proof}

\begin{figure}[p]
    \hfill
    \begin{subfigure}[b]{0.30\textwidth}
    \includegraphics[page=1,scale=0.30]{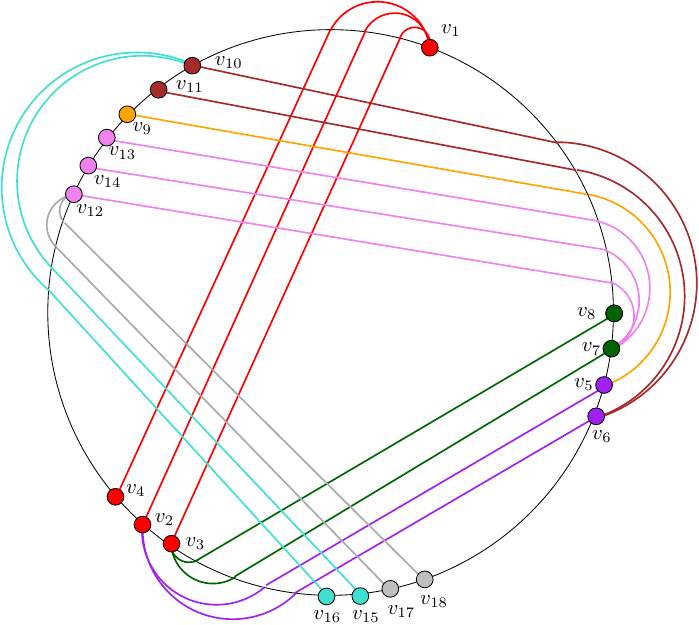}
      \subcaption{}
      \label{fig:phase-2-a}
    \end{subfigure}
    \hfill
    \begin{subfigure}[b]{0.30\textwidth}
    \includegraphics[page=2,scale=0.30]{phase-2.pdf}
      \subcaption{}
      \label{fig:phase-2-b}
    \end{subfigure}
    \hfill
    \begin{subfigure}[b]{0.30\textwidth}
    \includegraphics[page=3,scale=0.30]{phase-2.pdf}
      \subcaption{}
      \label{fig:phase-2-c}
    \end{subfigure}
     \hfill \\
     
    \begin{subfigure}[b]{0.30\textwidth}
    \includegraphics[page=4,scale=0.30]{phase-2.pdf}
      \subcaption{}
      \label{fig:phase-2-d}
    \end{subfigure} 
    \hfill
    \begin{subfigure}[b]{0.30\textwidth}
    \includegraphics[page=5,scale=0.30]{phase-2.pdf}
      \subcaption{}
      \label{fig:phase-2-e}
    \end{subfigure}
     \hfill
    \begin{subfigure}[b]{0.30\textwidth}
    \includegraphics[page=6,scale=0.30]{phase-2.pdf}
      \subcaption{}
      \label{fig:phase-2-f}
    \end{subfigure} \\
    
    \begin{subfigure}[b]{0.30\textwidth}
    \includegraphics[page=7,scale=0.30]{phase-2.pdf}
      \subcaption{}
      \label{fig:phase-2-g}
    \end{subfigure} 
    \hfill
    \begin{subfigure}[b]{0.30\textwidth}
    \includegraphics[page=8,scale=0.30]{phase-2.pdf}
      \subcaption{}
      \label{fig:phase-2-h}
    \end{subfigure}
     \hfill
    \begin{subfigure}[b]{0.30\textwidth}
    \includegraphics[page=9,scale=0.30]{phase-2.pdf}
      \subcaption{}
      \label{fig:phase-2-i}
    \end{subfigure} 
    
    \caption{Some steps of Phase~2 of Algorithm \textsc{Untangle}. (a) Every level of $T$ is in full-rainbow configuration. (b)-(h) The leaf $v_{16}$ moves towards the spine-traversal of its incident edge. (i) Leaf $v_{16}$
    is identified with its parent $v_{10}$.}
    \label{fig:phase-2}
\end{figure}


\section{Full Proofs from \cref{sec:paths} (\nameref{sec:paths})}

\lePathChiSpiral*
\label{le:path-chi-spiral*}

\begin{proof}
First, we prove that at the end of Step~1 Algorithm $\chi$-Spiral gets a topological linear embedding $\Gamma$ with $\chi'=(n-2)(n-3)/2$ crossings.

Since all the edges of $P_n$ are assigned to the same page, $\Gamma$ is one-page and spine traversal free. We prove that, in $\Gamma$, each edge crosses all the other edges but those incident to its end-vertices.
Consider any two edges $(v_i,v_{i+1})$ and $(v_j,v_{j+1})$ of $P_m$ such that $i$, $j$, $i+1$, and $j+1$ are pairwise different and assume, w.l.o.g., that $i<j$. We prove that $(v_i,v_{i+1})$ and $(v_j,v_{j+1})$ cross in $\Gamma$.

Suppose that $i$ and $j$ are both odd. We have that, since $i<j$, $v_i \prec v_j$. 
Also, both $i+1$ and $j+1$ are even, with $i+1<j+1$, hence, $v_{i+1} \prec v_{j+1}$. 
Further, since $j$ is odd and $i+1$ is even, $v_j$ is placed before $v_{i+1}$ on the spine ($v_j \prec v_{i+1}$). 
In summary, $v_i \prec v_j \prec v_{i+1} \prec v_{j+1}$. It follows that $(v_i,v_{i+1})$ and $(v_j,v_{j+1})$ cross. 
The case $i$ and $j$ are both even is analogous.

Suppose that $i$ is odd and $j$ is even. We have that, since $i<j$ it is also $i<j+1$; since $j+1$ is odd, $v_i \prec v_{j+1}$. 
Also, both $i+1$ and $j$ are even, with $i+1<j$, hence, $v_{i+1} \prec v_j$.
Further, since $i+1$ is even and $j+1$ is odd, $v_{j+1}$ is placed before $v_{i+1}$ ($v_{j+1} \prec v_{i+1}$).
In summary, $v_i \prec v_{j+1} \prec v_{j+1} \prec v_j$. It follows that $(v_i,v_{i+1})$ and $(v_j,v_{j+1})$ cross. 
The case $i$ is even and $j$ is odd is analogous.

The time complexity of Step~1 of the algorithm follows immediately from the construction, which is just a scan of the path.

Concerning Step~2 of the algorithm, we have what follows. 
(1) Because of the range of $\chi$, we have that $\Delta_\chi < n-3$. 
(2) If $\Delta_\chi$ is odd we have $0 \leq k < (n-4)/2$.
If $\Delta_\chi$ is even we have $0 \leq k < (n-3)/2$.
Because of these upper bounds on $k$, after the moves, $v_1$ is still to the left of $v_n$ in $\Gamma$. 
(3) If $v_1$ moves by one position to the right, then it is placed right after $v_3$, hence, the number of crossings in $\Gamma$ decreases by one (namely $(v_1,v_2)$ no longer crosses $(v_3,v_4)$).
(4) If $v_1$ moves by $k+1$ positions to the right then the number of crossings in $\Gamma$ decreases by $1+2k$; e.g.\ if $k=1$ then $v_1$ is positioned between $v_5$ and $v_7$, hence, $(v_1,v_2)$ no longer crosses $(v_3,v_4)$, $(v_4,v_5)$, and $(v_5,v_6)$.
Essentially, the first move of $v_1$ to the right decreases the number of crossings in $\Gamma$ by one, while each of the subsequent moves decreases the number of crossings by exactly two.

Observe also that a move of $v_n$ to the left by one position decreases the number of crossings in $\Gamma$ by one unit.

Since Step~2 involves only a linear number of moves on the spine of $\Gamma$, its time complexity is also linear.
\end{proof}

Now, we exploit Algorithm $\chi$-Spiral and \cref{le:path-chi-spiral} to efficiently produce a drawing of $P_n$ with ``any'' number of crossings, proving the following theorem.

\section{Full Proofs from \cref{sec:rac} (\nameref{sec:rac})}
\leRACpse*
\label{le:RAC-pse*}
\begin{proof} 
  We assume without loss of generality that no two points of $S$ have the same $x$-coordinate (this is in fact always the case by suitably rotating the plane). Let $L_{top}$ be 
  a horizontal line  above the  point of $S$ with the largest $y$-coordinate
  and $L_{bottom}$ be
  a horizontal line below the  point of $S$ with the smallest $y$-coordinate.
  Let $\mathcal{S}$ be the strip defined as the intersection of the bottom half-plane of $L_{top}$ and the top half-plane of $L_{bottom}$. Project the points of $S$ onto the $x$-axis and order them by increasing $x$-coordinate. Let $\Gamma$ be a topological linear layout of $G$ whose spine is the $x$-axis and whose vertices are the projection points of $S$ onto the spine. For each spine traversal of $\Gamma$ add a dummy point to $S$ that has the same $x$-coordinate of the spine traversal and is in the interior of  $\mathcal{S}$. Let $S'$ be the set of points that includes all points of $S$ and all dummy points corresponding to the spine traversals of the edges of $\Gamma$.

  We first show how to construct a RAC point-set embedding, then we drop the angle-crossing requirement, and finally discuss the time complexity.

\smallskip
\noindent{\bf RAC point-set embeddings.} Since no two points of $S'$ have the same $x$-coordinate, we can define around each point $s \in S'$ a vertical strip $\tau_s$ whose interior contains $s$ and such that no other element of $S'$ is in the interior or on the boundary of $\tau_s$; we call $\tau_s$ the \emph{safe strip of $s$}.

   Let $u$ be a vertex of $\Gamma$ and let $s\in S'$ be the point of $S'$ having the same $x$-coordinate as $u$. Let $deg_{top}(u)$ be the number of edges of $\Gamma$ incident to $u$ in the top page and let $deg_{bottom}(u)$ be the number of edges incident to $v$ in the bottom page. We place $deg_{bottom}(u)$ dummy points in the interior of $\tau_s \cap L_{bottom}$ and $deg_{top}(u)$ dummy points in the interior of $\tau_s \cap L_{top}$. We then connect $s$ to each such dummy points. Each  segment connecting $s$ to a dummy point along $\tau_s \cap L_{bottom}$ is a \emph{bottom stub} of $s$. Each  segment connecting $s$ to a dummy point along $\tau_s \cap L_{top}$ is a \emph{top stub} of $s$.   

   Let $a$ be a spine traversal along an edge of $\Gamma$ and let  $s\in S'$ be the point of $S'$ having the same $x$-coordinate as $a$. The \emph {top stub}  of $s$ is the vertical segment  from $s$ to the 
   projection of $s$ onto  $L_{top}$. The \emph {bottom stub}  of $s$ is the vertical segment from 
   $s$ 
   to the projection of $s$ onto  $L_{bottom}$. The projection point of $s$ to $L_{top}$ ($L_{bottom})$ is referred to as the \emph{ endpoint of the top (bottom) stub } of $s$.

    We are now ready to compute a RAC point-set embedding of $G$.
    Every vertex or spine traversal of $\Gamma$ is mapped to the point of $S'$ with the same $x$-coordinate. For every vertex $u \in \Gamma$ we order its incident edges in the bottom page from left to right; similarly we order from left to right its incident edges in the top page. Let $s \in S'$ be the point onto which we map $u$: We order both its bottom stubs and its top stubs from left to right. 
    Let $(u,v)$ be the $i$-th edge of $\Gamma$ incident to $u$ (either in the bottom or top page of $\Gamma$) and the $j$-th edge incident to $v$ (either in the bottom or top page of $\Gamma$). 
    
    Let $(a,b)$ be a leg of $(u,v)$ in $\Gamma$. Let $s \in S'$ be the point to which $a$ is mapped and let $s' \in S'$ be the point to which $b$ is mapped. Without loss of generality, we assume that $a \prec b$ (the case where $a \succ b$ is handled similarly). If $a$ and $b$ are both spine traversals and  $(a,b)$ is in the top (bottom) page of $\Gamma$, we connect the endpoint $p$ of the top (bottom) stub  of $s$ to the endpoint $p'$ of the top (bottom) stub   of $s'$  with a polyline consisting of two segments: the segment incident to $p$ has slope $+1$ ($-1$), the segment incident to $p'$ has slope $-1$ ($+1$). Note that the leg $(a,b)$ is mapped to a polyline with three bends: One at $p$, another at $p'$ and a third one where the  two segments with opposite slopes meet.  The case where either $a=u$ and/or $b =v$ is treated similarly; the only difference being that the polyline representing the leg $(a,b)$ in the point-set embedding must be incident to either the $i$-th stub of the point representing $u$ or to the $j$-th stub of the point representing $v$. In this case $(a,b)$ is  also represented by a polyline having $3$ bends. 
   After every leg of $\Gamma$ has been drawn in the point-set embedding by means of the above procedure, all dummy points corresponding to the spine traversals of $\Gamma$ are removed from $S'$. By construction, every edge of $\Gamma$ that traverses the spine $k$ times (and consists of $k+1$ legs) is represented in the point-set embedding as a polyline having $3(k +1)$ bends. Therefore, if $\Gamma$ has at most $\sigma$ spine traversals per edge, the curve complexity of the point-set embedding is $3(\sigma +1)$.
   
   By construction, the stubs are in the interior of the safe strips and the safe strips do not overlap with each other. This implies that the point-set embedding has no edge crossing in the interior of $\mathcal{S}$. Also, note that the safe regions follow the left to right order of the points around which they are defined: This order is consistent with the left to right order of the vertices  and spine traversals along the spine of $\Gamma$.  It follows that a crossing occurs in the point-set embedding if and only if we have four points $s_0,s_1,s_2, s_3$ in $S'$ with $x(s_0)<x(s_1)<x(s_2)<x(s_3)$ which correspond to four vertices or spine traversals  $u_0,u_1,u_2,u_3$, respectively, such that in $\Gamma$ we have an arc of circumference with diameter $u_0u_2$, another arc of circumference with diameter $u_1u_3$, such that both legs are in the same page and and $u_0 \prec u_1 \prec u_2 \prec u_3$ in the linear order. This implies that the point-set embedding of $G$ onto $S$ has the same number $\chi$ of edge crossings as the  the topological linear embedding of $G$.  
   
   Since all segments along the polylines outside $\mathcal{S}$ have either slope $+1$ or slope $-1$ and since any crossing occurs outside $\mathcal{S}$, we also have that the edge crossings form $\frac{\pi}{2}$ angles and hence the point-set embedding is a RAC drawing.

\smallskip
\noindent{\bf Non-RAC point-set embeddings.}  If we do not require that the edge crossings of the point-set embedding form right angles, we can construct a point-set embedding by refining ideas of \cite{DBLP:journals/jgaa/KaufmannW02} and of ~\cite{DBLP:journals/tcs/BadentGL08,DBLP:journals/tcs/GiacomoGLN20} which describe crossing free point-set embeddings with constant curve complexities.
More precisely, we again define $S'$ as we did in the case of RAC point-set embeddings and we  draw a path $\pi$  by connecting the points of $S'$ in the increasing order of their x-coordinates. Let $\rho$ be the maximum slope, in absolute value, of any segment along $\pi$ and let $\tau$ be an arbitrary slope such that $\tau > \rho$. As in the previous case, $\Gamma$ denotes a topological linear layout  representing the topological linear embedding of $G$.

Let $(u,v)$ be an edge of $\Gamma$ and let $(a,b)$ be a leg of $(u,v)$. Let $s \in S'$ be the point to which $a$ is mapped and let $s' \in S'$ be the point to which $b$ is mapped.  We map $(a,b)$ to a polyline that connects $s$ and $s'$ and consists of two segments, one with slope $\tau$ and the other with slope $-\tau$. If $(a,b)$ is in the top page of $\Gamma$, the polyline is drawn above $\pi$; below $\pi$ otherwise.
 Once all legs of $G$ have been represented this way, we have that by construction two edges in the point-set embedding cross if and only if they cross in $\Gamma$. However, some edges that are incident on a same vertex may partly overlap. Let $e_1, e_2, \cdots, e_k$ be a set of such overlapping edges and let $u$ be their common endvertex. We perturb the slope of each edge by a quantity that is less than $\tau - \rho$ in a such a way that: (i) no overlap is left and (ii) the circular order of the edges incident to $u$ is preserved. Finally, we remove the edges of path $\pi$ and the points of $S'$ that in $\Gamma$ correspond to spine traversals (these points  become bends).

Since: (i) each leg is represented as a polyline with one bend, (ii) each edge $(u,v)$ has at most $\sigma$ legs, and (iii) the point-set embedding can have a bend at each point of $S'$ that correspond to a spine traversals of $\Gamma$, we have that the curve complexity of the point-set embedding is at most $2\sigma +1$.  

\smallskip
\noindent{\bf Time complexity.} Concerning the time complexity, the two procedures in this proof  sort the points of $S$,  map the vertices of $G$ to the points of $S$, and  process an edge at a time. They spend constant time to draw any   portion of an edge between two consecutive spine traversals (if any). It follows that all edges of the point-set embedding are processed in $O(m\sigma)$-time, which leads to an overall computational complexity of
$O(m\sigma + n \log n)$ 
to compute a (RAC) point-set embedding of $G$.
\end{proof}

\thRac*
\label{th:rac*}
\begin{proof}
    The statement about paths and the statement about trees with the points not being collinear are an immediate consequence of~\cref{le:tree-topological-linear-embedding}, \cref{le:path-topological-linear}, and~\cref{le:RAC-pse}. It remains to prove the statement in the case that $T$ is a tree with more than two leaves, $S$ is a set of collinear points, and we aim at computing a RAC point-set embedding of $T$ on $S$. We assume $T$ to be rooted which implies that for any edge $(u,v)$ of $T$ $u$ is the parent of $v$ or vice-versa. Also, w.l.o.g. we assume that the $x$-axis passes through the points of $S$.

  We sort the points of $S$ by increasing $x$-coordinate.  We compute a topological linear embedding $\mathcal{E}$ of $T$ with $\chi$ crossings and at most two spine traversals per edge in $O(n^2)$ time by ~\cref{le:tree-topological-linear-embedding}. From $\mathcal{E}$ we  construct in $O(n)$ time a topological linear layout $\Gamma$ of $T$ such that the vertices are mapped to the points of $S$ (see also \cref{sec:preliminaries}). For a vertex $u$ of $\Gamma$, let $deg_{top}(u)$ ($deg_{bottom}(u)$) be the number of its  incident edges in the top (bottom) page. 

   To construct a RAC point-set embedding of $T$, we replace each leg $(a,b)$ of $\Gamma$ by a polyline consisting of the two sides of an isosceles right triangle whose hypotenuse is the segment of the spine with $a$ and $b$ as its endpoints. Let $\Gamma'$ be the resulting  drawing. Although, strictly speaking, $\Gamma'$ is not a topological linear layout of $G$ we shall call spine the line through the vertices of $\Gamma'$; similarly, we call spine traversal of $\Gamma'$  the point of intersection between its spine and a segment of an edge not incident to a vertex. The top (bottom) page of $\Gamma'$ is the half-plane above (below) its spine. A leg of $\Gamma'$ is the polyline of $\Gamma'$ representing a leg of $\Gamma$.

If $T$ is not a path, 
$\Gamma'$ will have some overlapping edges. The edge overlaps are removed   as follows:
    Let $\delta$ be the minimum horizontal distance between any two points representing vertices or spine traversals along the spine of $\Gamma'$. For any vertex $u$ along the spine of $\Gamma'$, we define a vertical strip $\tau_u$ having width $\frac{\delta}{4}$ and such that $u$ is in the center of $\tau_u$. Let $L_{top}$ and $L_{bottom}$ be the two horizontal lines having $y$-coordinates $\frac{\delta}{16}$ and $-\frac{\delta}{16}$, respectively. Note that the $y$-coordinates  of lines $L_{top}$ and $L_{bottom}$ are selected so that, for each vertex $u$, the lines $L_{top}$ and $L_{bottom}$ intersect the edges incident to $u$ in the interior of the strip $\tau_u$. 
       For every vertex $u$, let $l^b_u = \tau_u \cap L_{bottom}$ and $l^t_u = \tau_u \cap L_{top}$   be two line segments associated with $u$, one in the bottom page and the other in the top page, respectively. We  subdivide  $l^b_u$ with $deg_{bottom}(u)$  dummy points and $l^t_u$ with  $deg_{top}(u)$ dummy points. We shall connect $u$ to each such dummy point: A  segment connecting $u$ to a dummy point along $l^b_u$ ($l^t_u$)  is a \emph{bottom stub} (\emph{top stub}) of $u$. We order the edges incident to $u$ in the bottom (top) page from left to right and associate the $j$-th edge with the $j$-th bottom (top) stub, $1 \leq j \leq deg_{bottom}(u)$ ($1 \leq j \leq deg_{top}(u)$).  

    We now provide details on how to place, for each vertex $v$ of $T$,  the subdivision points on $l^b_v$ and $l^t_v$, where the vertices are processed in a top-down traversal of $T$.  
    Let $(u,v)$ be any edge of $\Gamma'$ such that $u$ is the parent of $v$ in $T$. In the description that follows, we assume that: (i) $(u,v)$ is the $i$-th edge incident to $u$ in the bottom page; (ii) $(u,v)$ is the $k$-th edge incident to $v$ in the bottom page,  and (iii) $u$ is to the left of $v$ (the case where  either $(u,v)$ is incident to $u$ and/or to $v$ in the top page or when $u$ is to the right of $v$ is treated in a similar way, possibly  changing sign to the slopes of the segments).  The $k$-th subdivision point along $l^b_v$, say $p$,  is chosen so that  its incident bottom stub  has the same slope as the edge $(u,v)$ in $\Gamma'$ (i.e. either $+1$ or $-1$).  Then, along $l^b_v$,    $k-1$ subdivision points are placed to the left of $p$  and $deg_{bottom}(v)-k$ subdivision points are placed to the right of $p$.  The above placement of subdivision points guarantees that the unique edge entering $v$ from its parent uses a bottom stub of slope either $+1$ or $-1$.
    Consider now any leg $(a,b)$ of  edge $(u,v)$ in $\Gamma'$. If  both $a$ and $b$ are spine traversals, or if $a$ is a spine traversal and $b=v$, we do not modify the polyline representing $(a,b)$ in $\Gamma'$. If $a=u$ and either $b$ is a spine traversal or $b=v$, let $q$ be the $i$-th subdivision point of $l^b_u$. We draw $(a,b)$ as a polyline consisting of three segments: Two segments, one having slope $-1$ and the other having slope $+1$, connect $q$ to $b$; the third segment is the $i$-th stub of $u$, namely the segment whose endpoints are $u$ and $q$. By repeating this procedure for every edge of $\Gamma'$, the overlaps between the edges incident to a common vertex are removed.    
    Concerning the curve complexity of edge $(u,v)$, note that each leg has one bend except for the leg incident to $u$, which has two bends; also, there can be a bend at each spine traversal. Since $(u,v)$ has at most two spine traversals by \cref{le:tree-topological-linear-embedding}, it follows that  the point-set embedding has curve complexity $6$. 
    By construction, the edge crossings only occur between segments having opposite slopes. Therefore we have constructed a RAC point-set embedding of $T$ on $S$. \cref{fig:rac} is an example of a RAC point-set embedding on a set of collinear points computed with the procedure above. Since every leg can be processed in $O(1)$ time and the number of legs is $O(n)$, the overall time complexity is dominated by the construction of the linear embedding of $T$, that is $O(n^2)$ by \cref{le:tree-topological-linear-embedding}.  
\end{proof}

\end{document}